\theoremstyle{plain}
\newtheorem{proposition}{Proposition}
\newtheorem{corollary}{Corollary}
\newtheorem{The}{Theorem}
\theoremstyle{definition}
\newtheorem{definition}{Definition}[section]
\newtheorem{example}{Example}[section]
\theoremstyle{remark}
\def\fc{{\mathfrak{C}}}
\def\fv{{\mathfrak{V}}}
\def\fr{{\mathfrak{R}}}
\def\bbbn{{\mathbb N}}
\def\bbbc{{\mathbb C}}
\def\cD{{\mathcal D}}
\def\dm{{\bullet}}
\def\v{{\bf V}}
\title{Symbolic Representation and Classification of $N=1$ Supersymmetric Evolutionary Equations}
\author{{\sc Kai Tian} \\
Department of Mathematics \\
China University of Mining \& Technology, Beijing, P. R. China  \\
and \\
{\sc Jing Ping Wang} \\
School of Mathematics, Statistics \& Actuarial Science \\
University of Kent, Canterbury, UK}
\date{}
\begin{document}
\maketitle

\begin{abstract}
We extend the symbolic representation to the ring of $N=1$ supersymmetric differential polynomials, 
and demonstrate that operations on the ring, such as the super derivative, Fr\'{e}chet derivative and super commutator, can be carried out in the symbolic way. 
Using the symbolic representation, we classify scalar $\lambda$-homogeneous $N=1$ supersymmetric evolutionary equations with 
nonzero linear term when $\lambda>0$ for arbitrary order and give a comprehensive description of all such integrable equations.
\end{abstract}

\section{Introduction}
Classification of all integrable equations for a given family of nonlinear 
equations is one of fundamental problems in the field of soliton theory, and 
can be tackled by many methods, among which the symmetry approach has been 
proved to be very efficient and powerful. The symmetry approach takes the 
existence of infinitely many higher order infinitesimal symmetries as the 
definition of integrability. It has been used to classify large classes of both 
nonlinear partial differential equations and difference-differential equations. 
We refer the reader to the recent review papers \cite{asy00,mikh09} and 
references therein for details. 

The ultimate goal is to obtain the classification result for integrable 
equations of any order, i.e. the global classification result \cite{mikh09s}. So 
far, results of this kind have been obtained for scalar homogeneous 
evolutionary equations in both commutative \cite{sand98} and 
non-commutative \cite{olver00} cases, and a class of second order (in
time) non-evolutionary equations of odd order in spatial variable \cite{nw07}. 
They have been achieved mainly due to the symbolic representation of the ring of 
differential polynomials. The symbolic representation (an abbreviated form of the 
Fourier transformation), originally applied to the theory of integrable 
equations by Gel'fand and Dikii \cite{gd75}, enables us to translate 
the question of integrability into the problem of divisibility of certain 
multi-variable polynomials, which can be solved by results from number theory and 
algebraic geometry. It is a suitable tool to study 
integrability of non-commutative \cite{olver00}, non-evolutionary \cite{nw07,mikh07}, 
non-local (integro-differential) \cite{mikh03}, multi-component \cite{sand04} 
and multi-dimensional equations \cite{wang06}. 

Supersymmetry was introduced into the 
soliton theory in the late of 1970s. Many classical integrable equations have 
been successfully embedded in their supersymmetric extensions, including the 
supersymmetric sine-Gordon equation \cite{kulish78}, the supersymmetric 
Korteweg-de Vries (KdV) equation\cite{manin85,mathieu88}, the supersymmetric 
nonlinear Schr\"{o}dinger equations \cite{roelofs92}, the supersymmetric two 
boson equation \cite{brunelli94}, the supersymmetric Harry Dym equations 
\cite{brunelli03}, etc. These supersymmetric integrable prototypes have been 
extensively studied in the past four decades, and shown to have various novel 
features, such as fermionic nonlocal conserved densities \cite{kersten88}, 
non-unique roots of Lax operator \cite{oevel91} and odd Hamiltonian structures 
\cite{popo09}. Recently, discrete integrable systems on Grassmann algebras \cite{gramik13,xue13,xue14,xue15}, 
as well as Grassmann extensions of Yang-Baxter maps \cite{gramik16}, have been established 
due to the development of Darboux-B\"{a}cklund transformations of super and/or
supersymmetric integrable equations.

Though the list of supersymmetric integrable equations are long, 
it may be far from complete taking account of the fact that the (non-)existence of supersymmetric 
counterparts of some classical models, like the Ibragimov-Shabat equation 
\cite{ibra80} and the Kaup-Kupershmidt equation \cite{kaup80}, is not 
proved. Some efforts have been made to enrich the garden of supersymmetric
integrable equations. Fermionic extensions of Burgers and Boussinesq equations 
were constructed and shown to have infinitely many symmetries \cite{kiswolf06}. 
A class of fifth order supersymmetric evolutionary equations (which are $3/2$-homogeneous in our terminology.) admitting seventh
order symmetries were presented, and all of them are claimed to be integrable \cite{tian10}. 

In this paper, we develop the symbolic representation for the ring of $N=1$ 
supersymmetric differential polynomials so that we can globally classify scalar 
$\lambda$-homogeneous evolutionary equation of the form
\begin{equation}\label{susymodel}
\Phi_t=\Phi_n+ K\Big(\Phi, (\cD \Phi), \Phi_1, (\cD \Phi_1), \cdots, 
\Phi_{n-1}, (\cD \Phi_{n-1})\Big),
\end{equation}
where the super function $\Phi = \Phi(x,\theta,t)$ is fermionic 
(anti-commutative), $\Phi_k$ stands for the $k$-th order derivative of $\Phi$ 
with respect to $x$, and $\cD$ denotes the super derivative. 
(Detailed explanations about our notations will be given in Section \ref{sec:2}.) 
We show that there are only eight hierarchies of  
nontrivial supersymmetric integrable equations. To the best of our knowledge, the hierarchy of 
the supersymmetric fifth order modified KdV equation \eqref{s5mkdv} has not appeared in the literature. 
This classification result also enables us to give definite answers to the non-existence of the above equations.

The arrangement of this paper is as follows. In Section \ref{sec:2}, we give a basic 
introduction on a super Lie bracket for the super differential ring of smooth 
functions in $\Phi$ and its (super) derivatives, and then define the concepts of 
infinitesimal symmetries and integrability for $N=1$
supersymmetric evolutionary equations. In Section \ref{sec:3}, we present 
the main result of this paper, i.e. a complete list of nontrivial $\lambda$-homogeneous 
($\lambda>0$) supersymmetric integrable equations.
This result is global in the sense that if a $\lambda$-homogeneous $N=1$ 
supersymmetric evolutionary equation of arbitrary order is nontrivial and integrable, 
then it must belong to one of symmetry hierarchies of eight equations we identified.
The trivial supersymmetric integrable equations of such kind are listed in Appendix \ref{app:a}.
In Section \ref{sec:4}, we develop the symbolic representation converting every $N=1$ supersymmetric 
differential polynomial to a unique multi-variable polynomial (and vice versa), which is anti-symmetric 
with respect to symbols of fermionic variables and meanwhile symmetric with respect to symbols of bosonic variables.
Inferred from this one-to-one correspondence, we show that operations of supersymmetric differential 
polynomials, such as partial derivatives, the super derivative, derivative with respect to $x$
Fr\'{e}chet derivative and super commutator, can be carried out in the symbolic way. 
Interestingly and importantly, the super commutator of the linear term $\Phi_k$ with 
any supersymmetric differential polynomial has the same symbolic formulation as 
its counterpart in the classical case \cite{sand98}. This remarkable fact guarantees 
that equation \eqref{susymodel} can be globally classified via the same strategy.
We explain it in Section \ref{sec:5} although some detailed proofs are skipped.
Concluding remarks and discussions are given in the last section.

\section{Symmetries of $N=1$ supersymmetric evolutionary equations}\label{sec:2}
In this section, we give a brief account on symmetries for $N=1$ supersymmetric evolutionary equations. Algebraic backgrounds about evolutionary equations with fermionic variables can be found in \cite{kuper85,wata89}, while rudiments of super algebras in \cite{rogers07}.

It is convenient to work with a super variable $\Phi=\Phi(x,\theta,t)$ 
where both $x$ and $t$ are usual commutative space-time coordinates, while $\theta$ is an anti-commutative 
independent variable such that $\theta^2=0$. The super variable $\Phi$ can be either fermionic (anti-commutative) or bosonic (commutative)
in this section.
The $k$-th order derivative of $\Phi$ with respect to $x$ is denoted by $\Phi_k \equiv (\partial_x^k\Phi)(k\in\bbbn)$\footnote{$\bbbn$ denotes the set of non-negative integers, $\{0,1,2,3,\cdots\}$.}, (We often omit the subscript of $\Phi_0$, and simply write $\Phi$.) and $\mathcal{D}$ stands for the super derivative defined by $\mathcal{D} \equiv \partial_\theta+\theta\partial_x$ with the property $\mathcal{D}^2=\partial_x$.

Let $\fc$ denote the super ($\mathbb{Z}_2$-graded) differential ring of smooth functions or simply polynomials in $\Phi$ and its (super) derivatives. As a matter of fact, every element in $\fc$ is supersymmetric, namely invariant under the transformation
\begin{equation}\label{susy}
\left\{\begin{aligned}
x & \mapsto x-\eta\theta \\
\theta & \mapsto\theta+\eta 
\end{aligned}\right.\qquad\mbox{($\eta$ is a fermionic parameter).}
\end{equation}
The multiplication on $\fc$ is super commutative, that is, for $P,Q\in\mathfrak{C}$
\begin{equation*}
PQ=(-1)^{|P||Q|}QP 
\end{equation*}
where $|~|$ denotes the parity of a super object\footnote{Formulas involving parities are extended to all elements in $\mathfrak{C}$ by linearity \cite{rogers07}.}. Derivatives on $\fc$, such as the super derivative $\cD$ and partial derivatives with respect to $\Phi_k$ or $(\cD\Phi_k)$, obey super Leibniz rules
\begin{eqnarray*}
\cD (PQ) = (\cD P)Q + (-1)^{|P|}P(\cD Q) , \quad
\frac{\partial }{\partial\dm}(PQ) = \frac{\partial P}{\partial\dm} Q + (-1)^{|\dm||P|}P\frac{\partial Q}{\partial\dm}
\end{eqnarray*}
where $\dm$ represents $\Phi_k$ or $(\cD\Phi_k)$.  For any element $P\in\fc$, there must be a least non-negative integer $k$ such that 
\begin{equation*}
\frac{\partial P}{\partial\Phi_l} = \frac{\partial P}{\partial(\cD \Phi_l)} = 0 \quad(l>k) .
\end{equation*}
If, in addition, we have 
\begin{equation*}
\frac{\partial P}{\partial(\cD \Phi_k)} = 0 ,
\end{equation*}
then $P$ is said to be {\em of order $k$}. Otherwise, we say $P$ is {\em of order $k{1\over 2}$}.

\begin{definition}
A partial differential operator of the form
\begin{equation}\label{svf}
\bm{P} = \sum_{k=0}\left( P_{2k}\frac{\partial}{\partial\Phi_k} + P_{2k+1}\frac{\partial}{\partial(\cD \Phi_k)}\right)  \quad (P_j\in \fc,~\forall j\in \bbbn)
\end{equation}
is called {\it a vector field on $\fc$}.
\end{definition}

Regarding the vector field $\bm{P}$ defined by \eqref{svf}, if there exists $|\bm{P}|\in \mathbb{Z}_2$ such that
\begin{equation*}
|\bm{P}| = (|P_j| - |\Phi| - j)\pmod 2 ,
\end{equation*}
where $P_j$ is an arbitrary non-zero coefficient, then $\bm{P}$ is {\em of parity $|\bm{P}|$}. All vector fields form a super vector space, denoted by $\mathfrak{V}$. In fact, $\mathfrak{V}$ is a super Lie algebra with the bracket defined by
\begin{equation*}
[\bm{P},\bm{Q}] = \bm{P}\bm{Q} - (-1)^{|\bm{P}||\bm{Q}|}\bm{Q}\bm{P} ,
\end{equation*}
and the super Jacobi identity reads as
\begin{equation*}
(-1)^{|\bm{P}||\bm{R}|}[\bm{P},[\bm{Q},\bm{R}]] + (-1)^{|\bm{Q}||\bm{P}|}[\bm{Q},[\bm{R},\bm{P}]] + (-1)^{|\bm{R}||\bm{Q}|}[\bm{R},[\bm{P},\bm{Q}]] = 0 \quad (\bm{P},\bm{Q},\bm{R}\in\mathfrak{V}) .
\end{equation*}

\begin{example} The prolongation of super derivative $\cD $ on $\mathfrak{C}$ is a vector field of parity $1$ given by
\begin{equation*}
\sum_{k=0} \left((\cD \Phi_k)\frac{\partial}{\partial \Phi_k} + \Phi_{k+1}\frac{\partial}{\partial (\cD \Phi_k)} \right) .
\end{equation*}
We shall still denote it by $\cD $ without confusion. 

Given a vector field $\bm{P}$ defined by \eqref{svf} of parity $|\bm{P}|$, by straightforward calculation we obtain 
\begin{equation*}
[\bm{P},\cD ] = \sum_{k=0}\left(\left(P_{2k+1} - (-1)^{|\bm{P}|}(\cD P_{2k})\right)\frac{\partial}{\partial \Phi_k} + \left(P_{2(k+1)} - (-1)^{|\bm{P}|}(\cD P_{2k+1})\right)\frac{\partial}{\partial (\cD \Phi_k)}\right) .
\end{equation*}
Hence, $\bm{P}$ commutes with $\cD $, i.e. $[\bm{P},\cD ] = 0$ if and only if 
\begin{equation*}
P_{j} = (-1)^{|\bm{P}|}(\cD P_{j-1})= (-1)^{j|\bm{P}|}(\cD ^jP_0) \quad  (j\in \mathbb{N}).
\end{equation*}
\end{example}

\begin{definition}\label{def:evo}
Given $P\in\mathfrak{C}$,  a vector field
\begin{equation*}
\sum_{k=0} \left( (\partial_x^kP)\frac{\partial}{\partial \Phi_k} + (-1)^{(|P|-|\Phi|)}(\cD \partial_x^kP)\frac{\partial}{\partial (\cD \Phi_k)} \right)
\end{equation*}
is called an {\it evolutionary} vector field, denoted by $\v_P$, whose parity is $(|P|-|\Phi|)\pmod 2$.
\end{definition}
We call $P\in\fc$ the {\it characteristic} of $\v_P$. When we talk about evolutionary vector fields, we often simply refer to their characteristics. The super Lie bracket of two evolutionary vector fields is still evolutionary, and indeed, for $P,Q\in\mathfrak{C}$ we have
\begin{equation}\label{veclie}
[\v_P,\v_Q] =\v_H, \quad H= \v_P(Q) - (-1)^{|\v_P||\v_Q|}\v_Q(P).
\end{equation}
This defines a super Lie bracket and thus the set of evolutionary vector fields
is a super Lie subalgebra of $\fv$. We are going to rewrite the Lie bracket in terms of Fr\'{e}chet derivatives.
\begin{definition}\label{def:fre}
For any $Q\in\fc$, we define a linear differential operator
\begin{equation}\label{frechet}
\mathrm{D}_Q = \sum_{k=0} \left( (-1)^{|\Phi|(|Q|-|\Phi|)}\frac{\partial Q}{\partial \Phi_k}\partial_x^k + (-1)^{(|\Phi|+1)(|Q|-|\Phi|+1)}\frac{\partial Q}{\partial (\cD \Phi_k)}\cD \partial_x^k \right) ,
\end{equation}
and call it the {\it Fr\'{e}chet derivative} of $Q$. 
\end{definition}
Note that the parity of the Fr\'{e}chet derivative for $Q$ is $(|Q|-|\Phi|)\pmod 2$, which is the same as 
$|\v_Q|$. In fact, we have
\begin{equation*}
\v_P(Q) = (-1)^{|\v_P||\v_Q|}\sum_{k=0} \left( (-1)^{|\Phi|(|Q|-|\Phi|)}\frac{\partial Q}{\partial \Phi_k}(\partial_x^kP) + (-1)^{(|\Phi|+1)(|Q|-|\Phi|+1)}\frac{\partial Q}{\partial (\cD \Phi_k)}(\cD \partial_x^kP) \right) .
\end{equation*}
Comparing to equation \eqref{frechet}, it becomes to
\begin{equation*}
\v_P(Q) = (-1)^{|\v_P||\v_Q|}\mathrm{D}_Q(P) .
\end{equation*}
Therefore, we endow $\fc$ a super Lie bracket defined by
\begin{equation}\label{def:com}
[P,Q] = \mathrm{D}_P(Q) - (-1)^{|\mathrm{D}_P||\mathrm{D}_Q|}\mathrm{D}_Q(P) ,
\end{equation}
whose parity is $|[P,Q]| = (|P|+|Q|-|\Phi| )\pmod 2$. This super Lie algebra 
is isomorphic to the super Lie subalgebra of  evolutionary vector fields since we have
\begin{equation*}
[\v_P,\v_Q] = - \v_{[P,Q]} .
\end{equation*}
The super Jacobi identity on $\mathfrak{C}$ is given by
\begin{equation}\label{pjacobi}
(-1)^{|\mathrm{D}_P||\mathrm{D}_R|}[P,[Q,R]] + (-1)^{|\mathrm{D}_Q||\mathrm{D}_P|}[Q,[R,P]] + (-1)^{|\mathrm{D}_R||\mathrm{D}_Q|}[R,[P,Q]] = 0 \quad (P,Q,R\in\mathfrak{C}) .
\end{equation}

\begin{definition}
For a $N=1$ supersymmetric evolutionary equation\footnote{The temporal variable $t$ is bosonic, so $K$ must have the same parity as $\Phi$.}
\begin{equation}\label{susyeq}
\Phi_t = K, \quad (K\in\mathfrak{C}~\mbox{and}~|K|=|\Phi|),
\end{equation}
any $Q\in\mathfrak{C}$ such that $[K,Q]=0$ is called an {\it infinitesimal symmetry}\footnote{There is no constraints on parities of symmetries.}(abbreviated to symmetries) of equation \eqref{susyeq}.
\end{definition}

\begin{example}
Equation $\eqref{susyeq}$ has at least three symmetries, namely 
\begin{align*}
\Phi_1,\quad K\quad \mbox{and} \quad S\equiv (\cD \Phi)-2\theta\Phi_1 ,
\end{align*}
where $K$ is obviously certified by a trivial identity $[K,K]=0$, and the other two are proved as follows:

The Fr\'{e}chet derivatives of $K$, $\Phi_1$ and $S$ are respectively given by
\begin{equation*}
\mathrm{D}_K = \sum_{k=0}  \left(\frac{\partial K}{\partial \Phi_k} \partial_x^k + (-1)^{(|\Phi|+1)} \frac{\partial K}{\partial (\cD \Phi_k)} \cD \partial_x^k \right),~~\mathrm{D}_{\Phi_1} = \partial_x~~\mbox{and}~~\mathrm{D}_S = \cD  - 2\theta\partial_x .
\end{equation*}
Notice that $|\mathrm{D}_K|=0$, and according to the Lie bracket \eqref{def:com}, we have
$[K,Q] = \mathrm{D}_K(Q) - \mathrm{D}_Q(K)$. Hence, we easily conclude that $\Phi_1$ is
a symmetry of equation \eqref{susyeq} since
\begin{align*}
\mathrm{D}_K(\Phi_1) = \partial_x(K) = \mathrm{D}_{\Phi_1}(K) ,
\end{align*}
that is, $[K, \Phi_1]=0$. For $S$, we compute it in the same way:
\begin{align*}
\mathrm{D}_K(S) =& \sum_{k=0}  \left(\frac{\partial K}{\partial \Phi_k} ((\cD \Phi_k) - 2\theta\Phi_{k+1}) + (-1)^{(|\Phi|+1)} \frac{\partial K}{\partial (\cD \Phi_k)} (2\theta(\cD \Phi_{k+1}) - \Phi_{k+1}) \right)\\
=& (\cD K) - 2\theta(\partial_xK) = \mathrm{D}_{S}(K). 
\end{align*}
In fact, $\Phi_1$ indicates the invariance of equation \eqref{susyeq} under $x$-translation, while $K$ represents the invariance under $t$-translation, and $S$ is the infinitesimal form of the transformation \eqref{susy}.
\end{example}

As an immediate corollary of the super Jacobi identity \eqref{pjacobi}, {\it if $Q_1$ and $Q_2$ are symmetries of equation \eqref{susyeq}, then so is $[Q_1,Q_2]$.} Therefore, all symmetries of equations \eqref{susyeq} constitute a super Lie subalgebra of $\mathfrak{C}$, which is the centralizer of $K$ in $\mathfrak{C}$ and denoted by $C_{\mathfrak{C}}(K)$.

Finally, we end the section with the definition of integrability used in this paper.
\begin{definition}
The supersymmetric equation \eqref{susyeq} is said to be {\it integrable} if $C_{\mathfrak{C}}(K)$ is infinite-dimensional. 
\end{definition}

\section{A complete list of scalar $\lambda$-homogeneous ($\lambda>0$) $N=1$ supersymmetric integrable evolutionary equations}\label{sec:3}
In this section, we give a global classification result for scalar
$\lambda$-homogeneous
$N=1$ supersymmetric evolutionary equations like equation \eqref{susymodel} when $\lambda>0$. 
It should be reminded {\it the super variable $\Phi$ is assumed to be fermionic} as usual.
The completeness 
is in the sense that every other such integrable equation is contained in the 
symmetry hierarchy of one of these equations presented here.
A $n$th order supersymmetric evolutionary equation \eqref{susymodel}
is said to be {\it $\lambda$-homogeneous} if it is invariant under the scaling 
transformation 
\begin{equation*}
(x,\theta,t,\Phi) \mapsto (e^{-\epsilon}x,e^{-\epsilon/2}\theta,e^{-n\epsilon}t,e^{\lambda\epsilon}\Phi),\quad (\epsilon\in\mathbb{R} ) .
\end{equation*}
For example, the supersymmetric KdV equation \cite{manin85,mathieu88}
\begin{equation*}
\Phi_t = \Phi_3 + 3\Phi_1(\cD \Phi) + 3\Phi(\cD \Phi_1)
\end{equation*}
is a $3/2$-homogeneous integrable equation.

By letting $\Phi = \xi(x,t)+\theta v(x,t)$, where $\xi$ and $v$ are often
referred to as components of the super variable $\Phi$, any scalar $N=1$ supersymmetric 
equation in $\Phi$ can be rewritten as a system in $\xi$ and $v$. When 
$\xi$ vanishes, the reduced equation in $v$ is called the bosonic limit of that 
supersymmetric model. A supersymmetric equation is said to be a {\it 
nontrivial} extension of its bosonic limit if the corresponding system in $\xi$ 
and $v$ are 
truly coupled, otherwise it is categorized as a {\it trivial} extension since it 
is essentially equivalent to a certain classical equation under simple changes 
of variables, and trivially inherits integrability from the latter. However, it 
is noticed that trivial extensions happen to be involved with supersymmetric 
extensions of matrix models, or conformal field theories coupled to gravity 
\cite{becker93}, so they are worthy of more investigations as well. 
In the rest of this section, we list the nontrivial cases as well as their 
component forms, for which we use the notations $v_k\equiv (\partial_x^kv)$ and 
$\xi_k \equiv(\partial_x^k\xi)$ $(k\in\mathbb{N})$. For the trivial cases,
we list them in appendix \ref{app:a}.

\vspace{0.5cm}
There are no second order nontrivial $\lambda$-homogeneous $N=1$ supersymmetric integrable equations.

\subsection{Third order nontrivial integrable equations}
\begin{itemize}
\item[(i).] Supersymmetric KdV equation \cite{manin85,mathieu88} 
($\lambda=\frac{3}{2}$)
\begin{equation}\label{skdv}
\Phi_t = \Phi_{3} + 3\Phi_1(\cD \Phi) + 3\Phi(\cD \Phi_1) .
\end{equation}
We rewrite it in components:
\begin{equation*}
\left\{\begin{aligned}
v_t = v_{3} &+ 6v_1v + 3\xi_{2}\xi \\
\xi_t = \xi_{3} &+ 3\xi_1 v + 3\xi v_1 .
\end{aligned}\right.
\end{equation*}
As a reduction of the supersymmetric KP hierarchy introduced by Manin and Radul \cite{manin85}, the supersymmetric KdV equation \eqref{skdv} can be encoded in the Lax equation $L_t = [4(L^{3/2})_{+},L]$ where $L = \partial_x^2+\Phi\cD $. Through $r$-matrix approach, its bi-Hamiltonian structure  was established by Oevel and Popowicz \cite{oevel91}, and its recursion operator is consequently given by
\begin{equation}\label{kdvrec}
\Re=\Big(\partial_x^2\cD  + 2\Phi\partial_x + 2\partial_x\Phi + 
\cD \Phi\cD \Big)\partial_x^{-1} \Big(\partial_x\cD  + 
\Phi\Big)\partial_x^{-1} .
\end{equation}

\item[(ii).] Supersymmetric potential KdV equation ($\lambda=\frac{1}{2}$)
\begin{equation}\label{spkdv}
\Phi_t = \Phi_{3} + 3\Phi_1(\cD \Phi_1)
\end{equation}
and it is rewritten in components as
\begin{equation*}
\left\{\begin{aligned}
v_t = v_{3} &+ 3v_1^2 + 3\xi_{2}\xi_1 \\
\xi_t = \xi_{3} &+ 3\xi_1 v_1.
\end{aligned}\right.
\end{equation*}
If equation \eqref{spkdv} holds, then $\Phi_1$ solves the supersymmetric KdV 
equation \eqref{skdv}. The recursion operator of equation \eqref{spkdv} is 
\begin{equation*}
\Re=\partial_x^{-1}\Big(\partial_x^2\cD  + 2\Phi_1\partial_x + 
2\partial_x\Phi_1 + \cD \Phi_1\cD \Big)\partial_x^{-1} 
\Big(\partial_x\cD  + \Phi_1\Big) .
\end{equation*}

\item[(iii).] Supersymmetric modified KdV equation \cite{mathieu88} 
($\lambda=\frac{1}{2}$)
\begin{equation}\label{smkdv}
\Phi_t = \Phi_{3} - 3\Phi_1(\cD \Phi)^2 - 3\Phi(\cD \Phi_1)(\cD \Phi)
\end{equation}
and its component form is
\begin{equation*}
\left\{\begin{aligned}
v_t = v_{3} &- 6v_1v^2 - 3\xi_{2}\xi v - 3\xi_1\xi v_1 \\
\xi_t = \xi_{3} &- 3\xi_1v^2 - 3\xi v_1 v.
\end{aligned}\right.
\end{equation*}

If $\Phi$ is a solution to equation \eqref{smkdv}, then $\Phi_1 - 
\Phi(\cD \Phi)$, which is referred to as the super Miura transformation 
\cite{mathieu88}, solves the supersymmetric KdV equation \eqref{skdv}.
Substituting the super Miura transformation into the recursion operator 
\eqref{kdvrec} and factorizing it produces the recursion operator  
\cite{liu10} of equation \eqref{smkdv}
\begin{equation*}
\Re=4\partial_x \Phi\cD ^{-1}\Phi\cD  - 
(\partial_x - \cD \Phi\cD ^{-1}\Phi - 
2\partial_x \Phi\partial_x^{-1}\Phi)\cD (\cD -\Phi\cD ^{-1}
\Phi - 2\Phi\partial_x^{-1}\Phi\cD ) .
\end{equation*}

\item[(iv).] Supersymmetric third order Burgers equation \cite{tian10} 
($\lambda=\frac{1}{2}$)
\begin{equation}\label{s3burgers}
\Phi_t = \Phi_{3} + 3\Phi_1(\cD \Phi_1) + 3\Phi(\cD \Phi_{2}) + 3\Phi(\cD \Phi_1)(\cD \Phi)
\end{equation}
and in components it can be written as
\begin{equation*}
\left\{\begin{aligned}
v_t = v_{3} &+ 3v_{2}v + 3v_1^2 + 3v_1v^2 + 3\xi_{3}\xi + 3\xi_{2}\xi_1 + 3\xi_{2}\xi v+ 3\xi_1\xi v_1 \\
\xi_t = \xi_{3} &+ 3\xi_1 v_1 + 3 \xi v_{2} + 3\xi v_1 v .
\end{aligned}\right.
\end{equation*}

Equation \eqref{s3burgers} does not commute with the trivial supersymmetric Burgers equation \eqref{bsburgers} in Appendix A, in other words, its right hand side is not a symmetry of equation \eqref{bsburgers}. In fact, the third order symmetry of equation \eqref{bsburgers} should be
\begin{equation*}
\Phi_3+3\Phi_2(\cD \Phi)+3\Phi_1(\cD \Phi_1)+3\Phi_1(\cD \Phi)^2 .
\end{equation*}
Until now, we have neither a recursion operator nor a master symmetry for 
equation \eqref{s3burgers}.
\end{itemize}
\subsection{Fifth order nontrivial integrable equations}
\begin{itemize}
\item[(i).] Supersymmetric Sawada-Kotera equation \cite{tian09} 
($\lambda=\frac{3}{2}$)
\begin{equation}\label{ssk}
\Phi_t = \Phi_{5} + 5\Phi_{3}(\cD \Phi) + 5\Phi_{2}(\cD \Phi_1) + 5\Phi_1(\cD \Phi)^2
\end{equation}
and  in components it is of the form
\begin{equation*}
\left\{\begin{aligned}
v_t = v_{5} &+ 5v_{3}v + 5v_{2}v_1 + 5v_1v^2 - 5\xi_{3}\xi_1 \\
\xi_t = \xi_{5} &+ 5\xi_{3}v + 5\xi_{2}v_1 + 5\xi_1 v^2 .
\end{aligned}\right.
\end{equation*}

The recursion operator of the supersymmetric Sawada-Kotera equation \eqref{ssk} was deduced from its Lax equation \cite{tian09}, and later factorized by Popowicz \cite{popo09} into two ``odd'' Hamiltonian operators, which allow us to rewrite the recursion operator in a neater form as
\begin{equation*}
\Re=\Big(\partial_x^2\cD  + 2\Phi\partial_x + 2\partial_x\Phi + 
\cD \Phi\cD \Big) \partial_x^{-1} \Big(\partial_x^2\cD  + 
2\Phi\partial_x + 2\partial_x\Phi + \cD \Phi\cD \Big) 
\cD ^{-1} (\partial_x\cD  + \Phi)^2\cD ^{-1} .
\end{equation*}

\item[(ii).] Supersymmetric fifth order KdV equation \cite{tian10} 
($\lambda=\frac{3}{2}$)
\begin{equation}\label{s5kdv}
\Phi_t = \Phi_{5} + 10\Phi_{3}(\cD \Phi) + 15\Phi_{2}(\cD \Phi_1) + 5\Phi_1(\cD \Phi_{2}) + 15\Phi_1(\cD \Phi)^2 + 15\Phi(\cD \Phi_1)(\cD \Phi)
\end{equation}
and in components it can be written as
\begin{equation*}
\left\{\begin{aligned}
v_t = v_{5} &+ 10v_{3}v + 20 v_{2}v_1 + 30 v_1v^2 - 5\xi_{3}\xi_1 + 15\xi_{2}\xi v + 15\xi_1\xi v_1 \\
\xi_t = \xi_{5} &+ 10\xi_{3}v + 15\xi_{2}v_1 + 5\xi_1v_{2} + 15\xi_1v^2 + 15\xi v_1v .
\end{aligned}\right.
\end{equation*}

Applying the recursion operator \eqref{kdvrec} to the supersymmetric KdV equation \eqref{skdv}, its fifth order symmetry is obtained
\begin{equation}\label{kdv5sym}
\Phi_5 + 5\Phi_3(\cD \Phi) + 10\Phi_2(\cD \Phi_1) + 10\Phi_1(\cD \Phi_2) + 5\Phi(\cD \Phi_3) + 10\Phi_1(\cD \Phi)^2 + 20\Phi(\cD \Phi_1)(\cD \Phi) .
\end{equation}
This fact implies equation \eqref{s5kdv} does not belong to the Manin-Radul's supersymmetric KdV hierarchy leading by equation \eqref{skdv}. In fact, equation \eqref{s5kdv} has a sixth order recursion operator \cite{tian10}, rewritten as 
\begin{equation*}
\Re = \mathcal{A}\partial_x^{-1}\Big(\partial_x^3\cD + 6\Phi\partial_x^2 + 4(\cD\Phi)\partial_x\cD + 8\Phi_1\partial_x + 2(\cD\Phi_1)\cD + 3\Phi_2 + 9\Phi(\cD\Phi)\Big)\partial_x^{-1} \mathcal{A}\cD^{-1} ,
\end{equation*}
where $\mathcal{A} \equiv \partial_x^2\cD + 3\Phi\partial_x + (\cD\Phi)\cD + 2\Phi_1$.

\item[(iii).] Supersymmetric Fordy-Gibbons equation \cite{liu10} 
($\lambda=\frac{1}{2}$)
\begin{align}
\Phi_t = \Phi_{5} & - 5\Phi_{2}(\cD \Phi_{2}) - 5\Phi_1(\cD \Phi_{3}) - 5\Phi_{3}(\cD \Phi)^2 - 10\Phi_{3}\Phi_1\Phi - 5\Phi_{2}(\cD \Phi_1)(\cD \Phi) \nonumber\\
& - 10\Phi(\cD \Phi_{2})(\cD \Phi_1) - 5\Phi_1(\cD \Phi_{2})(\cD \Phi) - 5\Phi_1(\cD \Phi_1)^2 - 10\Phi_{2}\Phi_1\Phi(\cD \Phi) \nonumber \\
& + 10\Phi_1(\cD \Phi_1)(\cD \Phi)^2 - 10\Phi(\cD \Phi_1)^2(\cD \Phi) + 5\Phi_1(\cD \Phi)^4 \label{sfg}
\end{align}
and in components it is of the form
\begin{equation*}
\left\{\begin{aligned}
v_t = v_{5} &- 5v_{3}v_1 - 5v_{3}v^2 - 5v_{2}^2 - 20v_{2}v_1v - 5v_1^3 + 5v_1v^4 - 5\xi_{4}\xi_1 - 5\xi_{3}\xi_{2} \\
& - 5\xi_{3}\xi_1v - 5\xi_{2}\xi_1 v_1 - 10\xi_{2}\xi(v_{2} + v_1v) -10\xi_1\xi(v_{3} + v_{2}v + v_1^2)  \\
\xi_t = \xi_{5} &- 5\xi_{3}v^2 - 10\xi_{3}\xi_1\xi - 5\xi_{2}(v_{2}+v_1 v) - 10\xi_{2}\xi_1\xi v - 5\xi_1(v_{3}+v_{2}v) \\
& - 5\xi_1(v_1^2-2v_1v^2-v^4) - 10\xi(v_{2}v_1+v_1^2v) .
\end{aligned}\right.
\end{equation*}

{ If $\Phi$ is a solution to equation \eqref{sfg}, then $\Phi_1 - 
\Phi(\cD \Phi)$ solves the supersymmetric Sawada-Kotera equation 
\eqref{ssk}.} With the super Miura transformation, the bi-Hamiltonian structure 
of equation \eqref{sfg} was established \cite{liu10} from that of the 
supersymmetric Sawada-Kotera equation \eqref{ssk}. The recursion operator of 
equation \eqref{sfg} is given by 
\begin{equation*}
\Re=\cD \mathcal{B}\partial_x^{-1}\mathcal{B}^\dagger\cD \mathcal{B}
\mathcal { 
D}^{-1}(\cD -\Phi)\cD (\cD +\Phi)(\cD -\Phi)\mathcal 
{D}(\cD +\Phi) \cD ^{-1}\mathcal{B}^\dagger ,
\end{equation*}
where $\mathcal{B} \equiv \partial_x - \Phi\cD + 2(\cD \Phi)$ and $\mathcal{B}^\dagger$ is the formal adjoint of $\mathcal{B}$, i.e. $\mathcal{B}^\dagger = - \partial_x + \Phi\cD  + (\cD \Phi)$.

\item[(iv).] Supersymmetric fifth order modified KdV equation 
($\lambda=\frac{1}{2}$)
\begin{align}
\Phi_t = \Phi_{5} &+ 5\Phi_{3}(\cD \Phi_1) - 5\Phi_1(\cD \Phi_{3}) - 10\Phi_{3}(\cD \Phi)^2 - 15\Phi_{3}\Phi_1\Phi - 15\Phi_{2}(\cD \Phi_1)(\cD \Phi) \nonumber \\
& - 10\Phi_1(\cD \Phi_{2})(\cD \Phi) - 10\Phi_1(\cD \Phi_1)^2 - 15\Phi(\cD \Phi_{2})(\cD \Phi_1) - 15\Phi_{2}\Phi_1\Phi(\cD \Phi) \nonumber \\
& + 15\Phi_1(\cD \Phi_1)(\cD \Phi)^2 - 15\Phi(\cD \Phi_1)^2(\cD \Phi) + 15\Phi_1(\cD \Phi)^4 + 15\Phi(\cD \Phi_1)(\cD \Phi)^3 \label{s5mkdv}
\end{align}
and in components it is of the form
\begin{equation*}
\left\{\begin{aligned}
v_t = v_{5} & - 10v_{3}v^2 - 40v_{2}v_1v - 10v_1^3 + 30v_1v^4 - 5\xi_{4}\xi_1 - 5\xi_{3}\xi_{2} - 5\xi_{3}\xi_1v \\
& - 5\xi_{2}\xi_1v_{2} - 15\xi_{2}\xi(v_{2}+v_1v-v^3) - 15\xi_1\xi(v_{3}+v_{2}v+v_1^2-3v_1v^2) \\
\xi_t = \xi_{5} & + 5\xi_{3}(v_1-2v^2) - 15\xi_{3}\xi_1\xi - 15\xi_{2}v_1v - 15\xi_{2}\xi_1\xi v - 5\xi_1(v_{3}+2v_{2}v) \\
& - 5\xi_1(2v_1^2-3v_1v^2-3v^4) - 15\xi(v_{2}v_1+v_1^2v-v_1v^3) .
\end{aligned}\right.
\end{equation*}

Equation \eqref{s5mkdv} is not a symmetry for the supersymmetric 
modified KdV equation \eqref{smkdv}, whose fifth order symmetry is  given 
by 
\begin{align*}
\Phi_5 &  - 5\Phi_3(\cD \Phi)^2 - 15\Phi_2(\cD \Phi_1)(\cD \Phi) - 15\Phi_1(\cD \Phi_2)(\cD \Phi) - 10\Phi_1(\cD \Phi_1)^2 \\
& - 5\Phi(\cD \Phi_3)(\cD \Phi) - 10\Phi(\cD \Phi_2)(\cD \Phi_1) + 10\Phi_1(\cD \Phi)^4 + 20\Phi(\cD \Phi_1)(\cD \Phi)^3 .
\end{align*}
Equation \eqref{s5mkdv} is related to equation \eqref{s5kdv} by a Muira 
transformation: {\it supposing $\Phi$ is a solution to equation \eqref{s5kdv}, 
then $\Phi_1-\Phi(\cD \Phi)$ solves equation \eqref{s5kdv}.} Thus its
integrability immediately follows from that of equation \eqref{s5kdv} due to 
this relation. 
\end{itemize}

The main result of this paper is to show that the above list is complete for 
positive $\lambda$.
\begin{The}\label{the}
A scalar $\lambda$-homogeneous ($\lambda>0$) $N=1$ supersymmetric
equation  (\ref{susymodel}) is nontrivial integrable if and only if $\lambda= 
\frac{1}{2}$ or $\frac{3}{2}$ and the equation belongs
to one the preceding eight symmetry hierarchies.
\end{The}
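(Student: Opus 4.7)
The plan is to carry the classical Sanders–Wang symbolic-representation program over to the super setting, using the machinery built in Section \ref{sec:4} and outlined in Section \ref{sec:5}. The decisive input is the fact, announced in Section \ref{sec:4}, that the super commutator $[\Phi_k,Q]$ has the same symbolic shape as in the scalar commutative case; under the symbolic transform, the existence of a symmetry $\Phi_m+G$ of equation \eqref{susymodel} therefore becomes a divisibility question on certain multi-variable polynomials that are anti-symmetric in fermionic symbols and symmetric in bosonic ones, but otherwise of the classical type. The homogeneity weight $\lambda$ enters only through the total degree and the parity-count of those polynomials.

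First I would set up the symbolic picture: write $K$ and a hypothetical symmetry $Q$ as sums of homogeneous symbolic polynomials $\hat a_j(\xi_1,\dots,\xi_p;\zeta_1,\dots,\zeta_q)$ graded by number of fermionic vs.\ bosonic arguments, and translate $[\Phi_n,Q]+[K_{\mathrm{lin}},Q]+\cdots=0$ into a sequence of divisibility equations of the form $G_{n,m}\mid\hat a$, where $G_{n,m}$ is the super analogue of the classical $G$-function associated to the linear part $\Phi_n$. Then I would appeal to the Diophantine/algebraic-geometric results used in the commutative case (Beukers–Sanders–Wang type) to show that, for an infinite-dimensional centralizer $C_\fc(K)$ to exist, the symbolic divisibility must persist for arbitrarily large $m$; this forces $\lambda$ into a very short arithmetic list, and in the regime $\lambda>0$ only $\lambda=\tfrac12$ and $\lambda=\tfrac32$ survive.

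Next, for each of the two admitted weights, I would enumerate: the space of $\lambda$-homogeneous right-hand sides of order $n$ is finite-dimensional, and requiring a compatible low-order symmetry (of the smallest order not divisible by $n$ that the divisibility analysis allows) pins down all coefficients up to rescaling. At order $3$ and $\lambda=\tfrac12,\tfrac32$ and at order $5$ and $\lambda=\tfrac12,\tfrac32$ this should produce exactly the eight equations \eqref{skdv}, \eqref{spkdv}, \eqref{smkdv}, \eqref{s3burgers}, \eqref{ssk}, \eqref{s5kdv}, \eqref{sfg}, \eqref{s5mkdv}, together with the trivial extensions relegated to Appendix \ref{app:a}, which are split off by testing whether $K$ genuinely couples the components $\xi$ and $v$. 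Integrability of each surviving equation is then certified by the recursion operators or super Miura links already displayed in Section \ref{sec:3}, while any higher-order integrable candidate must commute with one of the generators in its weight class and hence lie in its hierarchy.

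The hard step will be Stage~1: making precise that the anti-symmetric/symmetric super divisibility problem really does inherit the arithmetic obstruction from the classical one. Concretely, I have to verify that the factor $G_{n,m}$ produced by $[\Phi_n,\cdot]$ in the super symbolic calculus coincides, after extracting parity signs and the $\cD$-prolongation, with the classical $G$-function up to invertible factors, so that the Beukers-type bound on $(n,m)$ applies without modification. Once this coincidence is nailed down, the rest of the argument is a finite, essentially computational case analysis combined with the already-recorded integrability certificates of the eight hierarchies.
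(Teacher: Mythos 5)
Your overall strategy is the paper's own: translate the symmetry condition into $G$-function divisibility via the symbolic transform, invoke the Beukers--Sanders--Wang factorization results, reduce to a finite list of candidate pairs $(n,\lambda)$, and finish by explicit coefficient computation plus the recursion operators and Miura links of Section \ref{sec:3}. The ``hard step'' you flag at the end --- that $[\Phi_k,\cdot]$ produces literally the classical $G$-function on the super symbols --- is not an open verification but is already Corollary \ref{cor:1} (formula \eqref{Gfk}); so that worry dissolves.

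There is, however, one genuine gap in your Stage 1. You assert that the divisibility analysis ``forces $\lambda$ into a very short arithmetic list, and in the regime $\lambda>0$ only $\lambda=\tfrac12$ and $\lambda=\tfrac32$ survive,'' and you then enumerate only at those two weights. That is not what the arithmetic obstruction delivers. The $G$-function factorization (Theorem \ref{BeukId}) together with the polynomiality requirement on expression \eqref{qua} yields only $\lambda\le\tfrac52+2\min(s,s')\le 6\tfrac12$ and the reduction to hierarchies starting at orders $2,3,5,7$; the resulting candidate set is the full Table \ref{table:1}, which contains many pairs with $\lambda\in\{1,2,\tfrac52,3,\tfrac72,\tfrac92,\tfrac{11}2,\tfrac{13}2\}$. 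Each of these must be written out with undetermined coefficients and shown to admit no nontrivial symmetry (or only a trivial extension); the restriction to $\lambda=\tfrac12,\tfrac32$ is an \emph{output} of that finite computation, not an input to it. As written, your enumeration silently skips these cases. A second, smaller omission: in the super setting the cubic bracket decomposes as $[K_2,Q_2]=P_{(1,2)}+P_{(3,0)}$ with components in $\fr^{(1,2)}$ and $\fr^{(3,0)}$ (cf. \eqref{grade}), and the divisibility by the relevant linear factors must be checked separately for each graded piece when $n$ and $m$ are both odd; this bifurcation has no classical analogue and is precisely what makes the super classification differ from the commutative one, so it cannot be waved through as ``otherwise of the classical type.'' You should also dispose of symmetries whose leading term is $(\cD\Phi_m)$, which the paper does by replacing $Q$ with $[Q,Q]=2\Phi_{2m+1}+\cdots$.
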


In particular, unlike the classical scalar case \cite{sand98}, some equations such as the 
second order Burgers equation, the Ibragimov-Shabat equation and the Kaup-Kupershmidt equation 
have no nontrivial supersymmetric integrable analogues. On the other hand, there are new supersymmetric 
third order Burgers, fifth order KdV and modified KdV equations. Super symmetrisation 
both decreases and increases the number of integrable equations in nontrivial 
cases.

\section{$N=1$ Supersymmetric differential polynomials and symbolic representation}\label{sec:4}
In this section, we extend symbolic representation to the ring of 
supersymmetric differential polynomials, that is, the elements in $\fc$ are 
polynomials in $\Phi$ and its (super) derivatives. As in the classical case \cite{sand98}, we concentrate ourselves on supersymmetric 
differential monomials to develop symbolic representation, and all results are 
straightforwardly generalized to supersymmetric differential polynomials by linearity. 

A supersymmetric  differential monomial of the form 
\begin{equation*}
\Phi_{k_1}\Phi_{k_2}\cdots \Phi_{k_m}(\cD \Phi_{l_1})(\cD \Phi_{l_2})\cdots(\cD \Phi_{l_n})
\end{equation*}
is said to be {\it of degree $(m,n)$} explicitly indicating its dependences on 
variables of different parities: $m$ fermionic variables and $n$ bosonic 
variables, and sometimes we say that its {\it total degree} is $r=m+n>0$ to emphasize the whole multiplicity in $\Phi$
and its (super) derivatives. 

Let $\fr^{(m,n)}$ be the space linearly spanned by all supersymmetric differential 
monomials of degree $(m,n)$ with coefficients in $\bbbc$, then for a fixed positive integer $r$, 
\begin{equation*}
\fr^{r} = \bigoplus_{m+n=r}\fr^{(m,n)}\quad (m,n\in\bbbn)
\end{equation*}
is the space containing all monomials of total degree $r$ and their linear combinations, and hence
\begin{equation*}
\fr= \bigoplus_{r>0}\mathfrak{R}^{r}= \bigoplus_{m,n}\mathfrak{R}^{(m,n)} \quad (m,n\in\bbbn,\ m+n>0)
\end{equation*}
is the super commutative algebra of all supersymmetric differential polynomials in $\Phi$ and its (super) derivatives. It is noticed that $\bbbc\not\subset\fr$.

With the super Lie bracket \eqref{def:com}, $\fr$ is a graded super Lie algebra since
\begin{equation*}
 [\fr^r, \ \fr^{r^\prime}]\subset \fr^{r+r^\prime-1} \quad (r, r^\prime>0) .
\end{equation*}
Moreover, we have
\begin{equation}\label{grade}
 [\fr^{(m,n)},\ \fr^{(p,q)}]\subset \fr^{(m+p-1,n+q)}\bigoplus\fr^{(m+p+1,n+q-2)}, 
\end{equation}
where $n$ or $q$ are positive integers. In particular, $[\fr^{(1,0)},\ \fr^{(p,q)}]\subset \fr^{(p,q)}$.

To describe supersymmetric differential polynomials in symbolic language, the basic idea is to convert the fermionic variable $\Phi_k$ to $\phi\zeta^k$, and the bosonic one $(\cD \Phi_k)$ to $u z^k$, where $\phi$ or $u$ are used to count degrees, and $\Phi$'s subscript $k$ indicating the order of derivatives is replaced by powers of symbols. As results of this replacement, the differentiation with respect to $x$ is transformed to multiplications of symbols.  Variables in the same monomials of higher (total) degrees are assigned with different symbols, for instance, the quadratic term $\Phi_j\Phi_k$ is converted to $\phi^2\zeta_1^j\zeta_2^k$, while $(\cD \Phi_j)(\cD \Phi_k)$ is replaced by $u^2z_1^jz_2^k$. With these two simple examples, it is noticed that there are some confusions aroused by the super commutativity of multiplication. Specifically, we have
\begin{equation*}
\Phi_j\Phi_k = -\Phi_k\Phi_j\quad\mbox{and}\quad (\cD \Phi_j)(\cD \Phi_k) = (\cD \Phi_k)(\cD \Phi_j) .
\end{equation*}
Following above discussion, $-\Phi_k\Phi_j$ is replaced by $-\phi^2\zeta_1^k\zeta_2^j$, while $(\cD \Phi_k)(\cD \Phi_j)$ by $u^2z_1^kz_2^j$. From equivalent products, different symbolic expressions are obtained. To overcome these obstacles, the unique symbolic representation is defined to be  the average of all possible symbolic expressions. Hence, symbolic forms of $\Phi_j\Phi_k$ and $(\cD \Phi_j)(\cD \Phi_k)$ are respectively given by
\begin{equation*}
\frac{1}{2}\phi^2(\zeta_1^j\zeta_2^k - \zeta_2^j\zeta_1^k) \quad\mbox{and}\quad \frac{1}{2}u^2(z_1^jz_2^k+z_2^jz_1^k) .
\end{equation*}

\begin{definition}\label{defsymb}
The symbolic representation of a monomial in $\mathfrak{R}^{(m,n)}$ is defined by
\begin{equation*}
\Phi_{k_1}\Phi_{k_2}\cdots \Phi_{k_m}(\cD \Phi_{l_1})(\cD \Phi_{l_2})\cdots(\cD \Phi_{l_n}) \mapsto
\phi^m u^n\langle\zeta_1^{k_1}\zeta_2^{k_2}\cdots \zeta_m^{k_m}\rangle_{\mathcal{S}^{\zeta}_m} \langle z_1^{l_1}z_2^{l_2}\cdots z_n^{l_n}\rangle_{\mathcal{S}^z_n}
\end{equation*}
where $\langle~\rangle_{\mathcal{S}^{\zeta}_m}$ stands for anti-symmetrization of the product $\zeta_1^{k_1}\zeta_2^{k_2}\cdots \zeta_m^{k_m}$, i.e.
\begin{equation*}
\langle\zeta_1^{k_1}\zeta_2^{k_2}\cdots \zeta_m^{k_m}\rangle_{\mathcal{S}^{\zeta}_m} = \frac{1}{m!}\sum_{\sigma\in\mathcal{S}_m} \mathrm{sgn}(\sigma) \zeta_{\sigma(1)}^{k_1}\zeta_{\sigma(2)}^{k_2}\cdots \zeta_{\sigma(m)}^{k_m} ,
\end{equation*}
and $\langle ~\rangle_{\mathcal{S}^z_n}$ means to symmetrise the product $z_1^{l_1}z_2^{l_2}\cdots z_n^{l_n}$, namely
\begin{equation*}
\langle z_1^{l_1}z_2^{l_2}\cdots z_n^{l_n}\rangle_{\mathcal{S}^z_n} = \frac{1}{n!}\sum_{\sigma\in\mathcal{S}_n}z_{\sigma(1)}^{l_1}z_{\sigma(2)}^{l_2}\cdots z_{\sigma(n)}^{l_n} .
\end{equation*}
\end{definition}
A differential polynomial is a finite linear combination of monomials.
The symbolic representation of a polynomial $P\in\mathfrak{R}^{(m,n)}$ denoted 
by $\widehat{P}$ is in general of the form
\begin{equation*}
\widehat{P} = \phi^mu^n F(\zeta_1,\cdots,\zeta_m;z_1,\cdots,z_n), 
\end{equation*}
where the multi-variable polynomial $F$ is anti-symmetric with respect to its 
first $m$ arguments, but symmetric with respect to the others. Suppose a 
polynomial $Q\in\mathfrak{R}^{(p,q)}$ have the symbolic representation 
\begin{equation*}
\widehat{Q} = \phi^pu^q H(\zeta_1,\cdots,\zeta_p;z_1,\cdots,z_q) ,
\end{equation*}
then the product $PQ$  has the following symbolic representation:
\begin{equation}\label{prodsym}
\widehat{PQ} = \phi^{m+p}u^{n+q}\langle F(\zeta_1,\cdots,\zeta_m;z_1,\cdots,z_n)H(\zeta_{m+1},\cdots,\zeta_{m+p};z_{n+1},\cdots,z_{n+q})\rangle_{\mathcal{S}^{\zeta}_{m+p},\mathcal{S}^{z}_{n+q}} ，
\end{equation}
where $\langle~\rangle_{\mathcal{S}^{\zeta}_{m+p},\mathcal{S}^{z}_{n+q}} $ means 
to anti-symmetrize $\zeta_j$'s, and meanwhile symmetrize $z_i$'s. 

This defines the corresponding multiplication for the symbols. 
Formula \eqref{prodsym} is consistent with Definition \ref{defsymb}.  Symbolic 
representations of super differential monomials can be deduced from $\Phi_k 
\mapsto \phi \zeta_1^k$, $\cD  \Phi_k\mapsto u z_1^k$ and the 
multiplication rule \eqref{prodsym}.

All operations for the ring of supersymmetric differential polynomials can
be carried out with symbols in much simpler way.
Partial derivatives of $P$ with 
respect to $\Phi_k$ or $(\cD \Phi_k)$
can be calculated via its symbolic representation as follows:
\begin{eqnarray}
&&\widehat{\frac{\partial P}{\partial \Phi_k}} = 
m\phi^{m-1}u^n\frac{1}{k!}\frac{\partial^{k}F}{\partial 
\zeta_m^k}(\zeta_1,\cdots,\zeta_{m-1},0;z_1,\cdots,z_n);
\label{partial:1}\\
&&\widehat{\frac{\partial P}{\partial (\cD \Phi_k)}} = n\phi^m 
u^{n-1}\frac{1}{k!}\frac{\partial^{k}F}{\partial 
z_n^k}(\zeta_1,\cdots,\zeta_{m};z_1,\cdots,z_{n-1},0).\label{partial:2}
\end{eqnarray}
According to the definition of the super derivative $\cD$ and using the above two formulas, it follows that
\begin{align}
\widehat{(\cD P)} =& (-1)^{m-1}m\phi^{m-1}u^{n+1}\langle F(\zeta_1,\cdots, 
\zeta_{m-1},z_{n+1}; z_1,\cdots,z_n)\rangle_{\mathcal{S}^z_{n+1}}  \nonumber\\
& + (-1)^m n\phi^{m+1}u^{n-1} \langle F(\zeta_1,\cdots, \zeta_m; 
z_1,\cdots,z_{n-1}, \zeta_{m+1})\zeta_{m+1}\rangle_{\mathcal{S}^{\zeta}_{m+1}} . 
\label{superde}\\
= & (-1)^{m-1}\frac{m}{n+1}\phi^{m-1}u^{n+1}\sum_{i=1}^{n+1}F(\zeta_1,\cdots, 
\zeta_{m-1},z_{i};z_1,\cdots,\overline{z_i},\cdots, z_{n+1}) \nonumber\\
& + \frac{n}{m+1} \phi^{m+1}u^{n-1} \sum_{j=1}^{m+1} 
(-1)^{j-1}F(\zeta_1,\cdots,\overline{\zeta_j},\cdots, 
\zeta_{m+1};z_1,\cdots,z_{n-1}, \zeta_j)\zeta_j ,\label{superde2}
\end{align}
where $\overline{z_i}$ means that $z_i$ is omitted and $\overline{\zeta_j}$ is 
defined in the same manner. 
\begin{example}
Given $P = \Phi_3\Phi_2\Phi_1$, we calculate $\partial P/\partial \Phi_2$ using symbolic representation. The symbolic representation for $P$ is
\begin{equation*}
\widehat{P} = \phi^3F(\zeta_1,\zeta_2,\zeta_3) = \frac{1}{6}\phi^3(\zeta_1^3\zeta_2^2\zeta_3^1 + \zeta_2^3\zeta_3^2\zeta_1^1 + \zeta_3^3\zeta_1^2\zeta_2^1 - \zeta_1^3\zeta_3^2\zeta_2^1 - \zeta_2^3\zeta_1^2\zeta_3^1 - \zeta_3^3\zeta_2^2\zeta_1^1) .
\end{equation*} 
According to formula \eqref{partial:1}, we have
\begin{equation*}
\widehat{\frac{\partial P}{\partial \Phi_2}} = 3\phi^2\frac{1}{2!}\frac{\partial^2 F}{\partial \zeta_3^2} (\zeta_1,\zeta_2,0) = -  \frac{1}{2}\phi^2(\zeta_1^3\zeta_2^1-\zeta_2^3\zeta_1^1) ,
\end{equation*}
which conversely implies $\partial P/\partial \Phi_2 = -\Phi_3\Phi_1$, the expected result.
\end{example}

\begin{example}
For $P = \Phi_1\Phi$, let's calculate $(\cD P)$ and $(\partial_xP)$ through symbolic representation. Because
\begin{equation*}
\widehat{P} = \phi^2 F(\zeta_1,\zeta_2) = \frac{1}{2}\phi^2(\zeta_1^1\zeta_2^0 - \zeta_2^1\zeta_1^0) ,
\end{equation*}
it follows from formula \eqref{superde} that 
\begin{equation*}
\widehat{(\cD P)} = (-1)2\phi u F(\zeta_1,z_1) = -\phi u(\zeta_1^1z_1^0 - z_1^1\zeta_1^0),
\end{equation*}
which means $(\cD P) = -\Phi_1(\cD \Phi) + \Phi(\cD \Phi_1)$. Applying formula \eqref{superde} to $\widehat{(\cD P)}$ yields
\begin{equation*}
\widehat{(\partial_xP)} = -u^2 \langle z_2^1z_1^0\rangle_{\mathcal{S}^z_{2}} + u^2 \langle z_1^1z_2^0\rangle_{\mathcal{S}^z_{2}} + \phi^2 \langle \zeta_1^1\zeta_2^1\rangle_{\mathcal{S}^\zeta_{2}} - \phi^2 \langle \zeta_2^2\zeta_1^0\rangle_{\mathcal{S}^\zeta_{2}}
= - \frac{1}{2}\phi^2 (\zeta_2^2\zeta_1^0 - \zeta_1^2\zeta_2^0 ),
\end{equation*}
which give us $(\partial_xP) = \Phi_2\Phi$ as expected.
\end{example}

In the above example, we notice that $\widehat{(\partial_xP)}=(\zeta_1 +\zeta_2)\widehat{P}$ for $P= \Phi_1\Phi$. 
Here we give the general formula for it.

\begin{proposition}\label{prop:2}
Let $P\in\mathfrak{R}^{(m,n)}$ with symbolic representation $\widehat{P}= \phi^mu^n F(\zeta_1,\cdots,\zeta_m;z_1,\cdots,z_n)$. Then
\begin{equation*}\label{xder}
\widehat{(\partial_x P)} 
= \widehat{P}\left(\sum_{j=1}^m\zeta_j + \sum_{i=1}^nz_i\right).
\end{equation*}
Furthermore, 
\begin{equation*}
\widehat{(\partial_x^k P)} = \widehat{P}\left(\sum_{j=1}^m\zeta_j + \sum_{i=1}^nz_i\right)^k\quad (k\in\mathbb{N}) .
\end{equation*}
\end{proposition}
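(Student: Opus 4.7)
The plan is to prove the identity by induction on the total degree $r=m+n$ of the monomial, exploiting the fact that $\partial_x=\cD^2$ has parity zero and therefore obeys the ordinary (unsigned) Leibniz rule $\partial_x(PQ)=(\partial_x P)Q+P(\partial_x Q)$. The key structural fact I will rely on is that the linear form $\sum_{j=1}^m\zeta_j+\sum_{i=1}^n z_i$ is totally symmetric in the $\zeta$'s and in the $z$'s separately, and hence commutes with the brackets $\langle\cdot\rangle_{\mathcal{S}^\zeta_m}$ and $\langle\cdot\rangle_{\mathcal{S}^z_n}$ appearing in Definition \ref{defsymb}.

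For the base case $r=1$, there are exactly two generators. If $P=\Phi_k$ then $\widehat{P}=\phi\zeta_1^k$ and $\partial_x P=\Phi_{k+1}$, giving $\widehat{\partial_x P}=\phi\zeta_1^{k+1}=\zeta_1\widehat{P}$. If $P=(\cD\Phi_k)$ then $\widehat{P}=uz_1^k$ and $\widehat{\partial_x P}=uz_1^{k+1}=z_1\widehat{P}$. Both agree with the claim.

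For the inductive step, assume the formula holds for $P\in\fr^{(m,n)}$ with $\widehat{P}=\phi^m u^n F$ and $Q\in\fr^{(p,q)}$ with $\widehat{Q}=\phi^p u^q H$, both of total degree less than $r$. Writing the Leibniz rule and applying the multiplication rule \eqref{prodsym} to each of the two summands, I will combine them into
\begin{equation*}
\widehat{\partial_x(PQ)}=\phi^{m+p}u^{n+q}\bigl\langle F(\zeta_1,\dots,\zeta_m;z_1,\dots,z_n)\,H(\zeta_{m+1},\dots,\zeta_{m+p};z_{n+1},\dots,z_{n+q})\,L\bigr\rangle_{\mathcal{S}^\zeta_{m+p},\mathcal{S}^z_{n+q}},
\end{equation*}
where $L=\sum_{j=1}^{m+p}\zeta_j+\sum_{i=1}^{n+q}z_i$; the two inductive contributions $(\sum_{j\le m}\zeta_j+\sum_{i\le n}z_i)$ and $(\sum_{j>m}\zeta_j+\sum_{i>n}z_i)$ merge because the brackets contain all symbols. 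Since $L$ is symmetric in the full set of $\zeta$'s and of $z$'s, it passes outside the (anti)symmetrization, yielding $\widehat{\partial_x(PQ)}=\widehat{PQ}\cdot L$. By linearity this extends from monomials to all of $\mathfrak{R}^{(m,n)}$.

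The iterated statement $\widehat{\partial_x^k P}=\widehat{P}\,(\sum\zeta_j+\sum z_i)^k$ then follows by a trivial induction on $k$, since repeated multiplication by $L$ is again symmetric in all $\zeta_j$ and all $z_i$ and so commutes with the symmetrization brackets. I do not anticipate a genuine obstacle: the only point requiring care is verifying that a symmetric linear form in $\zeta_1,\dots,\zeta_m$ times an antisymmetric polynomial in the same variables remains antisymmetric (and similarly for $z$'s), which is immediate. An alternative route would apply formula \eqref{superde} twice to exploit $\partial_x=\cD^2$, but the bookkeeping of the two cross-terms (fermionic-to-bosonic and bosonic-to-fermionic) is more cumbersome than the direct Leibniz induction outlined above.
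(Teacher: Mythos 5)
Your argument is correct, but it is not the route the paper takes. The paper proves the first identity by writing $\partial_x=\cD^2$ and applying the symbolic super-derivative rule \eqref{superde} twice: the first application splits $\widehat{\cD P}$ into a piece in $\fr^{(m-1,n+1)}$ and a piece in $\fr^{(m+1,n-1)}$, and the second application produces four contributions whose symmetrizations and anti-symmetrizations recombine into $\widehat{P}\,\bigl(\sum_j\zeta_j+\sum_i z_i\bigr)$ --- precisely the cross-term bookkeeping you flag at the end as the more cumbersome alternative. Your proof instead runs an induction on total degree using the even Leibniz rule for $\partial_x$ together with the multiplication rule \eqref{prodsym}, and hinges on the observation that the two partial linear forms coming from the inductive hypothesis only become fully symmetric after they are summed, at which point the symmetric factor passes through the brackets $\langle\cdot\rangle_{\mathcal{S}^\zeta,\mathcal{S}^z}$. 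What your route buys is a cleaner argument that parallels the classical commutative case and avoids tracking the fermionic-to-bosonic and bosonic-to-fermionic exchanges of \eqref{superde}; what the paper's route buys is that it exercises, and implicitly cross-checks, the super-derivative formula \eqref{superde}, which is the genuinely new ingredient of the supersymmetric symbolic calculus, whereas your proof leans on \eqref{prodsym}, which the paper asserts as consistent with Definition \ref{defsymb} without a detailed proof. The iterated statement for $\partial_x^k$ follows identically in both treatments.
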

\begin{proof}
Since $(\partial_xP)=\cD (\cD P)$, the statement for $(\partial_x P)$ can be proved by applying the symbolic rule of super derivative \eqref{superde} or \eqref{superde2} to $\widehat{(\cD \Phi)}$. We rewrite $\widehat{\cD P}=\widehat{Q_1}+\widehat{Q_2}$,
where 
\begin{eqnarray*}
&&\widehat{Q_1} =  (-1)^{m-1}\frac{m}{n+1}\phi^{m-1}u^{n+1}\sum_{i=1}^{n+1}F(\zeta_1,\cdots, \zeta_{m-1},z_{i};z_1,\cdots,\overline{z_i},\cdots, z_{n+1}) ,\\
&&\widehat{Q_2} = \frac{n}{m+1} \phi^{m+1}u^{n-1} \sum_{j=1}^{m+1} (-1)^{j-1}F(\zeta_1,\cdots,\overline{\zeta_j},\cdots, \zeta_{m+1};z_1,\cdots,z_{n-1}, \zeta_j)\zeta_j 
\end{eqnarray*}
and $Q_1\in\mathfrak{R}^{(m-1,n+1)}$ while $Q_2\in\mathfrak{R}^{(m+1,n-1)}$.
We apply formula \eqref{superde} again to $\widehat{Q_1}$ and $\widehat{Q_2}$ separately, and carry out the symmetrization and 
anti-symmetrization accordingly. This leads to the proof of the statement. The calculation is straightforward
and we won't include all details.
\end{proof}

Note that formulas in Proposition \ref{prop:2} is essentially the same as the symbolic rule for derivatives with respect to $x$ in classical case \cite{sand98}.

We introduce the symbol for the super derivative $\cD$ as $\eta$, and consequently $\eta^2$ stands for $\partial_x$.
According to Definition \ref{def:fre} and using formulas \eqref{partial:1} and \eqref{partial:2}, we obtain the following statement 
for Fr\'{e}chet derivatives.

\begin{proposition}\label{prop:3}
Let $P\in\mathfrak{R}^{(m,n)}$ with symbolic representation $\widehat{P}= \phi^mu^n F(\zeta_1,\cdots,\zeta_m;z_1,\cdots,z_n)$.
The Fr\'{e}chet derivative of $P$ is represented in the following expression: 
\begin{equation}
\widehat{\mathrm{D}_P} = m\phi^{m-1} u^n F(\zeta_1,\cdots \zeta_{m-1}, \eta^2; z_1,\cdots,z_n) + n\phi^m u^{n-1} F(\zeta_1,\cdots \zeta_m; z_1,\cdots,z_{n-1},\eta^2)\eta ,
\end{equation}
where $\eta$ is the symbol for $\cD $.
\end{proposition}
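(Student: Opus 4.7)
The plan is to reduce everything to Definition \ref{def:fre} combined with the symbolic expressions \eqref{partial:1}--\eqref{partial:2} already established for partial derivatives. Since $\Phi$ is fermionic throughout Section \ref{sec:3} (so $|\Phi|=1$), the Fr\'{e}chet derivative of $P\in\fr^{(m,n)}$ splits as a sum over $k\ge 0$ of terms $\frac{\partial P}{\partial \Phi_k}\,\partial_x^k$ and $\frac{\partial P}{\partial (\cD \Phi_k)}\,\cD \partial_x^k$ together with $\mathbb{Z}_2$-graded sign prefactors depending on $|P|=m\pmod 2$. On the symbolic side, introducing $\eta$ as the symbol of $\cD$ and using $\cD^2=\partial_x$, the operators $\partial_x^k$ and $\cD \partial_x^k$ act on the right by multiplication by $\eta^{2k}$ and $\eta^{2k+1}$ respectively.

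Applying \eqref{partial:1} summand by summand and collecting the common factors yields
\begin{equation*}
m\phi^{m-1}u^n\sum_{k\ge 0}\frac{1}{k!}\frac{\partial^{k}F}{\partial \zeta_m^k}(\zeta_1,\ldots,\zeta_{m-1},0;z_1,\ldots,z_n)\,\eta^{2k},
\end{equation*}
which is the Taylor expansion of $F$ in its last fermionic slot about $0$ evaluated at $\zeta_m=\eta^2$; since $F$ is polynomial this sum is finite and collapses to $m\phi^{m-1}u^n F(\zeta_1,\ldots,\zeta_{m-1},\eta^2;z_1,\ldots,z_n)$. The parallel computation via \eqref{partial:2} turns the bosonic-slot contribution into $n\phi^m u^{n-1}F(\zeta_1,\ldots,\zeta_m;z_1,\ldots,z_{n-1},\eta^2)\,\eta$, where the unpaired trailing $\eta$ accounts for the odd exponent $2k+1$.

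The main obstacle I expect is the bookkeeping of the graded sign prefactors from Definition \ref{def:fre}: after substituting $|\Phi|=1$ and $|P|=m\pmod 2$ one must verify these combine with the sign conventions implicit in \eqref{partial:1}--\eqref{partial:2} (which were obtained by averaging over anti-symmetrisation of $\zeta$-variables and symmetrisation of $z$-variables) so as to produce exactly the unadorned coefficients $m$ and $n$ appearing in the proposition. Once that cancellation is checked, no further work is required, because the substitutions $\zeta_m\mapsto \eta^2$ and $z_n\mapsto \eta^2$ each modify only a single argument of $F$, and the required anti-symmetry in the remaining $\zeta$'s and symmetry in the remaining $z$'s is inherited directly from $F$.
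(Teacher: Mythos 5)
Your proposal is correct and follows exactly the route the paper intends: the paper offers no written proof of Proposition \ref{prop:3}, merely asserting that it follows from Definition \ref{def:fre} together with formulas \eqref{partial:1}--\eqref{partial:2}, and your term-by-term application of those formulas with the Taylor-series collapse $\sum_k \frac{1}{k!}\partial^k F(\ldots,0)\,\eta^{2k}=F(\ldots,\eta^2)$ is precisely that argument made explicit. The sign check you flag does work out: for $|\Phi|=1$ and $|P|\equiv m\pmod 2$ the prefactor $(-1)^{|\Phi|(|P|-|\Phi|)}=(-1)^{m-1}$ in Definition \ref{def:fre} exactly compensates the $(-1)^{m-1}$ by which the output of \eqref{partial:1} differs from the left super-derivative, while the bosonic prefactor is $+1$, yielding the unadorned coefficients $m$ and $n$.
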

\begin{example}
Let $P = \Phi_2\Phi(\cD \Phi_1)(\cD \Phi)$. It has the symbolic representation 
\begin{equation*}
\widehat{P} = \phi^2 u^2 F(\zeta_1,\zeta_2;z_1,z_2) = \frac{1}{4}\phi^2 u^2(\zeta_1^2\zeta_2^0 - \zeta_2^2\zeta_1^0)(z_1^1z_2^0 + z_2^1z_1^0) .
\end{equation*}
According to Proposition \ref{prop:3}, we have 
\begin{equation*}
\begin{aligned}
\widehat{\mathrm{D}_{P}}=& 2\phi u^2 F(\zeta_1,\eta^2;z_1,z_2) + 2\phi^2 u F(\zeta_1,\zeta_2;z_1,\eta^2)\eta \\
=& \frac{1}{2}\phi u^2 \zeta_1^2 (z_1^1z_2^0 + z_2^1z_1^0)\eta^0 - \frac{1}{2}\phi u^2 \zeta_1^0(z_1^1z_2^0 + z_2^1z_1^0) \eta^4 + \frac{1}{2}\phi^2 u(\zeta_1^2\zeta_2^0 - \zeta_2^2\zeta_1^0)z_1^1\eta\\
& + \frac{1}{2}\phi^2 u(\zeta_1^2\zeta_2^0 - \zeta_2^2\zeta_1^0)z_1^0 \eta^3 ,
\end{aligned}
\end{equation*}
which implies that the Fr\'{e}chet derivative of $P$ is given by
\begin{equation*}
\mathrm{D}_P=\Phi_2(\cD \Phi_1)(\cD \Phi) - \Phi(\cD \Phi_1)(\cD \Phi)\partial_x^2 + \Phi_2\Phi(D\Phi_1)\cD  + \Phi_2\Phi(D\Phi)\cD \partial_x .
\end{equation*}
\end{example}

Having derived the symbolic formulas for the super derivative $\cD$,  the usual derivative $\partial_x$ and 
Fr\'echet derivatives, we are in the place to write down the symbolic expression for $\mathrm{D}_P(Q)$
for $P\in \fr^{(m,n)}$ and $Q\in \fr^{(p,q)}$ as in classical case \cite{sand98}. 
However, the formula is rather long. We only give two special and simple cases, 
which we are going to use later.
\begin{proposition}\label{prop:4}
Let $P\in\mathfrak{R}^{(m,n)}$ and $P\mapsto\widehat{P}= \phi^mu^n F(\zeta_1,\cdots,\zeta_m;z_1,\cdots,z_n)$.
Then
\begin{equation}
\widehat{\mathrm{D}_P(\Phi_k)} = \widehat{P}\left(\sum_{j=1}^m\zeta_j^k +\sum_{i=1}^n z_i^k\right) .
\end{equation}
\end{proposition}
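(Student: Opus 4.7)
The plan is to start from the definition of the Fr\'echet derivative applied to $\Phi_l$,
\begin{equation*}
\mathrm{D}_P(\Phi_l) = \sum_{k\geq 0}\left(\varepsilon_1\,\frac{\partial P}{\partial \Phi_k}\,\Phi_{k+l} + \varepsilon_2\,\frac{\partial P}{\partial (\cD\Phi_k)}\,(\cD\Phi_{k+l})\right),
\end{equation*}
where $\varepsilon_1,\varepsilon_2$ are the parity prefactors from \eqref{frechet} (with $|\Phi|=1$ and $|P|\equiv m\pmod 2$), and to compute the symbolic representation of each summand separately.

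For the first family of summands I would use \eqref{partial:1} to represent $\widehat{\partial P/\partial\Phi_k}$, and then the product rule \eqref{prodsym} to multiply by the symbol $\phi\,\zeta^{\,k+l}$ of $\Phi_{k+l}$. Summing over $k\geq 0$ and invoking the Taylor identity
\begin{equation*}
\sum_{k\geq 0}\frac{1}{k!}\,\frac{\partial^k F}{\partial \zeta_m^k}(\zeta_1,\ldots,\zeta_{m-1},0;z_1,\ldots,z_n)\,\zeta_m^{\,k+l} \;=\; \zeta_m^{\,l}\,F(\zeta_1,\ldots,\zeta_m;z_1,\ldots,z_n)
\end{equation*}
collapses this family into $m\phi^m u^n\langle \zeta_m^{\,l}\,F\rangle_{\mathcal{S}_m^\zeta,\mathcal{S}_n^z}$. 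A parallel calculation on the $(\cD\Phi_k)$-family, using \eqref{partial:2} and the symbol $u\,z^{\,k+l}$ of $\cD\Phi_{k+l}$, produces the companion $n\phi^m u^n\langle z_n^{\,l}\,F\rangle_{\mathcal{S}_m^\zeta,\mathcal{S}_n^z}$.

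The key step is the (anti)symmetrization identity: since $F$ is antisymmetric in its $\zeta$-arguments and symmetric in its $z$-arguments,
\begin{equation*}
\langle \zeta_m^{\,l}\,F\rangle_{\mathcal{S}_m^\zeta} = \frac{1}{m}\Big(\sum_{j=1}^m \zeta_j^{\,l}\Big)F \qquad\text{and}\qquad \langle z_n^{\,l}\,F\rangle_{\mathcal{S}_n^z} = \frac{1}{n}\Big(\sum_{i=1}^n z_i^{\,l}\Big)F,
\end{equation*}
verified by expanding the (anti)symmetrizer and using the sign of each permutation to restore $F$ to its canonical argument order. Inserting these cancels the factors $m$ and $n$, and combining the two contributions yields $\widehat{P}\,\bigl(\sum_j \zeta_j^{\,l} + \sum_i z_i^{\,l}\bigr)$. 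The hard part will be the sign bookkeeping: one must check that $\varepsilon_1$, $\varepsilon_2$, and the permutation signs arising in the antisymmetrization conspire to a net $+1$ in both families; once this is in hand, the Taylor collapse and the (anti)symmetrization identity close the argument.
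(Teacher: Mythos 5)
Your plan is correct in outline, but it re-derives machinery the paper has already packaged. The paper's own proof is a two-line application of Proposition \ref{prop:3}: since $\widehat{\Phi_k}=\phi\zeta_1^k$, $\eta\widehat{\Phi_k}=uz_1^k$ and $\eta^2\widehat{\Phi_k}=\phi\zeta_1^{k+1}$, substituting into $\widehat{\mathrm{D}_P}$ gives
\begin{equation*}
m\phi^{m}u^n\bigl\langle F(\zeta_1,\cdots,\zeta_m;z_1,\cdots,z_n)\,\zeta_m^k\bigr\rangle_{\mathcal{S}^{\zeta}_{m}}+n\phi^mu^{n}\bigl\langle F(\zeta_1,\cdots,\zeta_m;z_1,\cdots,z_n)\,z_n^k\bigr\rangle_{\mathcal{S}^z_{n}},
\end{equation*}
after which the (anti)symmetrization identity closes the argument. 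Your Taylor collapse $\sum_{k}\frac{1}{k!}\partial^k_{\zeta_m}F(\ldots,0;\ldots)\,\zeta_m^{k+l}=\zeta_m^{l}F$ is exactly the content of the substitution $\zeta_m\mapsto\eta^2$ in Proposition \ref{prop:3}, so you are proving that proposition along the way rather than citing it; that is legitimate, and your explicit statement of $\langle\zeta_m^{l}F\rangle_{\mathcal{S}^\zeta_m}=\frac{1}{m}\bigl(\sum_j\zeta_j^{l}\bigr)F$ (using antisymmetry of $F$ so that $\mathrm{sgn}(\sigma)^2=1$) spells out the step the paper compresses into ``after symmetrisation and anti-symmetrisation''.

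The one genuine danger is the sign bookkeeping you defer, and as set up your computation would \emph{not} give a net $+1$ unless you notice the following: the prefactor $\varepsilon_1=(-1)^{|\Phi|(|P|-|\Phi|)}=(-1)^{m-1}$ in \eqref{frechet} is already absorbed into formula \eqref{partial:1}, which differs from the left super partial derivative by exactly $(-1)^{m-1}$ (compare $\partial\bigl(\Phi_2\Phi(\cD\Phi_1)(\cD\Phi)\bigr)/\partial\Phi_2$ computed via the Leibniz rule, giving $+\Phi(\cD\Phi_1)(\cD\Phi)$, with the output of \eqref{partial:1}, giving $-\Phi(\cD\Phi_1)(\cD\Phi)$). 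If you multiply the output of \eqref{partial:1} by $\varepsilon_1$ again you double-count the sign and the formula fails for even $m$. This cancellation is precisely what is built into the sign-free statement of Proposition \ref{prop:3}; so either invoke that proposition directly, or make the cancellation explicit --- ``conspire to a net $+1$'' is asserted in your write-up, not checked, and it is the only nontrivial point.
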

\begin{proof} Notice that $\widehat{\Phi_k}=\phi \zeta_1^k$, $\eta \widehat{\Phi_k}=u z_1^k$
and $\eta^2 \widehat{\Phi_k}=\phi \zeta_1^{k+1}$. Using Proposition \ref{prop:3}, we have
\begin{equation*}
\widehat{\mathrm{D}_P(\Phi_k)} = m\phi^{m} u^n \langle F(\zeta_1,\cdots \zeta_{m-1}, \zeta_m; z_1,\cdots,z_n)\zeta_m^k\rangle_{\mathcal{S}^{\zeta}_{m}} 
+ n\phi^m u^{n} \langle F(\zeta_1,\cdots \zeta_m; z_1,\cdots,z_{n-1},z_n)z_n^k\rangle_{\mathcal{S}^z_{n}} .
\end{equation*}
After symmetrisation and anti-symmetrisation, we obtain the required formula.
\end{proof}
Combining the result in this proposition and Proposition \ref{prop:2}, we have the following
formula for a super commutator  (cf. equation \eqref{def:com}), which plays an important 
role in later classification.
\begin{corollary}\label{cor:1}
Let $P\in\mathfrak{R}^{(m,n)}$ and $P\mapsto\widehat{P}= \phi^mu^n 
F(\zeta_1,\cdots,\zeta_m;z_1,\cdots,z_n)$.
The super commutator $[\Phi_k, P]$ is symbolically formulated 
as
\begin{align}\label{Gfun}
\widehat{[\Phi_k, P]} =\widehat{P} \left\{\left(\sum_{j=1}^m\zeta_j+\sum_{i=1}^n 
z_i\right)^k 
- \left(\sum_{j=1}^m\zeta_j^k + \sum_{i=1}^n z_i^k\right)\right\}.
\end{align}
\end{corollary}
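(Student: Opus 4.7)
The plan is to apply the definition (\ref{def:com}) of the super commutator to the pair $(\Phi_k, P)$ and then invoke the symbolic formulas of Propositions \ref{prop:2} and \ref{prop:4} for the two resulting pieces. First I would observe that $\mathrm{D}_{\Phi_k} = \partial_x^k$: in formula (\ref{frechet}) only the term with $\partial\Phi_k/\partial\Phi_k = 1$ survives, all other partial derivatives of $\Phi_k$ vanishing. In particular $|\mathrm{D}_{\Phi_k}| = 0$, so the parity sign in (\ref{def:com}) collapses to $+1$ and we obtain
\[
[\Phi_k, P] \;=\; \mathrm{D}_{\Phi_k}(P) - \mathrm{D}_P(\Phi_k) \;=\; \partial_x^k(P) - \mathrm{D}_P(\Phi_k).
\]

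The remaining step is a one-line substitution. Proposition \ref{prop:2} gives $\widehat{\partial_x^k(P)} = \widehat{P}\bigl(\sum_{j=1}^m \zeta_j + \sum_{i=1}^n z_i\bigr)^k$, and Proposition \ref{prop:4} gives $\widehat{\mathrm{D}_P(\Phi_k)} = \widehat{P}\bigl(\sum_{j=1}^m \zeta_j^k + \sum_{i=1}^n z_i^k\bigr)$. Subtracting the two expressions and factoring out $\widehat{P}$ produces exactly the right-hand side of (\ref{Gfun}).

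There is no genuine obstacle here: the content of the corollary is essentially packaged inside the preceding two propositions, and the only subtlety is verifying that the parity sign in the super commutator is trivial in this particular case. The real significance of the statement lies not in its proof but in its form, because the bracketed factor $\bigl(\sum \zeta_j + \sum z_i\bigr)^k - \bigl(\sum \zeta_j^k + \sum z_i^k\bigr)$ coincides formally with its classical counterpart in \cite{sand98}; that coincidence is precisely what will permit the divisibility/number-theoretic machinery used for classical scalar homogeneous hierarchies to be transplanted to the $N=1$ supersymmetric setting in Section \ref{sec:5}.
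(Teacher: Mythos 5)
Your proposal is correct and follows exactly the route the paper intends: the corollary is stated as the immediate combination of Proposition \ref{prop:2} (for $\widehat{\partial_x^k P}$) and Proposition \ref{prop:4} (for $\widehat{\mathrm{D}_P(\Phi_k)}$), with the parity sign in \eqref{def:com} trivial because $|\mathrm{D}_{\Phi_k}|=0$. Nothing is missing.
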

\begin{proposition}\label{prop:40}
Let $P, Q\in \fr^{(1,1)}$ and $P\mapsto\widehat{P}=\phi u F(\zeta_1, z_1), Q\mapsto\widehat{Q}=\phi u H(\zeta_1, z_1) $. Then
\begin{align*}
 \widehat{\mathrm{D}_P(Q)}=& \phi u^2\langle F(\zeta_1+z_2,z_1) H(\zeta_1,z_2) 
+F(\zeta_1; z_1+z_2) H(z_1,z_2)\rangle_{\mathcal{S}^z_{2}}
\\
& -\phi^3\langle F(\zeta_1,\zeta_2+\zeta_3) H(\zeta_2,\zeta_3) \zeta_3
\rangle_{\mathcal{S}^{\zeta}_{3}}
\end{align*}
\end{proposition}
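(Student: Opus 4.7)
The plan is to unravel $\mathrm{D}_P(Q)$ from Definition \ref{def:fre}, feed each constituent through the symbolic dictionary, and then recognise the emerging $k$-sums as the stated substitutions into $F$.

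First I would use the parity count $|P|=|Q|=|\Phi|+|\cD\Phi|=1$ for $P,Q\in\fr^{(1,1)}$, which makes $|P|-|\Phi|=0$, so both prefactors $(-1)^{|\Phi|(|P|-|\Phi|)}$ and $(-1)^{(|\Phi|+1)(|P|-|\Phi|+1)}$ in Definition \ref{def:fre} collapse to $+1$. Thus
\[
\mathrm{D}_P(Q)=\sum_{k\geq 0}\Bigl(\tfrac{\partial P}{\partial\Phi_k}(\partial_x^k Q)+\tfrac{\partial P}{\partial(\cD\Phi_k)}(\cD\partial_x^k Q)\Bigr).
\]
With $F(\zeta_1,z_1)=\sum_{j,\ell} f_{j\ell}\zeta_1^j z_1^\ell$, formulas \eqref{partial:1}--\eqref{partial:2} give $\widehat{\partial P/\partial\Phi_k}=u f_k(z_1)$ and $\widehat{\partial P/\partial(\cD\Phi_k)}=\phi g_k(\zeta_1)$, where $f_k$ and $g_k$ are the coefficients in the expansions $F(\zeta;z)=\sum_k \zeta^k f_k(z)=\sum_k g_k(\zeta)z^k$.

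Next I would obtain $\widehat{\partial_x^k Q}=\phi u(\zeta_1+z_1)^k H(\zeta_1,z_1)$ from Proposition \ref{prop:2}, and then apply the $\cD$-rule \eqref{superde} (with $m=n=1$) to split
\[
\widehat{\cD\partial_x^k Q}=u^2\langle(z_1+z_2)^k H(z_2,z_1)\rangle_{\mathcal{S}^z_2}-\phi^2\langle(\zeta_1+\zeta_2)^k H(\zeta_1,\zeta_2)\zeta_2\rangle_{\mathcal{S}^\zeta_2}
\]
into a $\fr^{(0,2)}$ piece and a $\fr^{(2,0)}$ piece. Multiplying by the symbols of $\partial P/\partial\Phi_k$ and $\partial P/\partial(\cD\Phi_k)$ according to the product rule \eqref{prodsym} and introducing fresh symbols produces three contributions: one in $\fr^{(1,2)}$ from the first sum with integrand $f_k(z_2)(\zeta_1+z_1)^k H(\zeta_1,z_1)$ under $\mathcal{S}^z_2$-symmetrisation; a second in $\fr^{(1,2)}$ from the bosonic half of $\cD\partial_x^k Q$ with integrand $g_k(\zeta_1)(z_1+z_2)^k H(z_2,z_1)$; and one in $\fr^{(3,0)}$ from the fermionic half with integrand $-g_k(\zeta_3)(\zeta_1+\zeta_2)^k H(\zeta_1,\zeta_2)\zeta_2$ under $\mathcal{S}^\zeta_3$-antisymmetrisation.

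The key reorganisation, which I would carry out last, is to exchange the sum over $k$ with the (anti-)symmetrisation and to recognise the generating identities
\[
\sum_k f_k(z_2)(\zeta_1+z_1)^k=F(\zeta_1+z_1;z_2),\quad \sum_k g_k(\zeta_1)(z_1+z_2)^k=F(\zeta_1;z_1+z_2),
\]
and similarly $\sum_k g_k(\zeta_3)(\zeta_1+\zeta_2)^k=F(\zeta_3;\zeta_1+\zeta_2)$. After these substitutions, relabeling $z_1\leftrightarrow z_2$ (trivial under $\mathcal{S}^z_2$) and applying the even 3-cycle $(\zeta_1,\zeta_2,\zeta_3)\mapsto(\zeta_2,\zeta_3,\zeta_1)$ (trivial under $\mathcal{S}^\zeta_3$) brings the three pieces into exactly the form stated in Proposition \ref{prop:40}. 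The only delicate point is bookkeeping: making sure the sign in front of the $\phi^3$-term is preserved through these relabelings, and confirming that the apparent asymmetry between $H(\zeta_1,z_1)$ versus $H(\zeta_1,z_2)$ and $H(z_1,z_2)$ indeed agrees after $\mathcal{S}^z_2$-symmetrisation.
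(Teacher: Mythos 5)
Your argument is correct and is essentially the paper's proof carried out in detail: the paper simply cites Propositions \ref{prop:2} and \ref{prop:3}, and your expansion of $F$ into the coefficient functions $f_k$, $g_k$ followed by the generating-function resummation is exactly what evaluating $\widehat{\mathrm{D}_P}$ from Proposition \ref{prop:3} on $Q$ (with $\eta^2$ realised via Proposition \ref{prop:2} and $\eta$ via formula \eqref{superde}) amounts to. The relabelling checks at the end (the $z_1\leftrightarrow z_2$ swap and the even $3$-cycle on the $\zeta$'s) correctly reconcile your symbol assignments with those dictated by the product rule \eqref{prodsym}, so the signs and arguments all match the stated formula.
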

\begin{proof} The statement can be proved by directly applying
Proposition \ref{prop:3} and Proposition \ref{prop:2}.
\end{proof}

In the end of this section, we calculate a fifth order symmetry of the supersymmetric KdV equation \eqref{skdv} using the symbolic method to illustrate its mechanism. 
We know that the supersymmetric KdV equation is of the form
\begin{equation*}
\Phi_t = \Phi_3 + K_2 \quad\mbox{with}\quad K_2 = 3\Phi_1(\mathcal{D}\Phi) + 3\Phi(\mathcal{D}\Phi_1) \in\fr^{(1,1)} .
\end{equation*}
We compute its fifth order symmetry
\begin{equation*}
Q = \Phi_5 + Q_2 + Q_3 + \cdots \quad\mbox{with}\quad Q_i\in\fr^i\quad (i=2,3,\cdots) 
\end{equation*}
term by term according to its degree.
Firstly the quadratic term $Q_2\in \fr^2$ is determined by
\begin{equation*}
[\Phi_3,Q_2]+[K_2,\Phi_5] = 0 .
\end{equation*}
Following Corollary \ref{cor:1}, we convert it to symbolics
\begin{equation*}
\widehat{Q_2} \left((\zeta_1+z_1)^3 - \zeta_1^3 - z_1^3\right) = \widehat{K_2}\left((\zeta_1+z_1)^5 - \zeta_1^5 - z_1^5\right) ,
\end{equation*}
where $\widehat{K_2} = 3\phi u(\zeta_1^1z_1^0 + \zeta_1^0 z_1^1)$. Hence, we obtain
\begin{align*}
\widehat{Q_2} = \frac{(\zeta_1+z_1)^5 - \zeta_1^5 - z_1^5}{(\zeta_1+z_1)^3 - \zeta_1^3 - z_1^3}\widehat{K_2} = 5\phi u \left(\zeta_1^3z_1^0 + \zeta_1^0z_1^3\right) + 10 \phi u\left(\zeta_1^1z_1^2 + \zeta_1^2z_1^1\right) ,
\end{align*}
which implies $Q_2 = 5\Phi_3(\mathcal{D}\Phi) + 5\Phi(\mathcal{D}\Phi_3) + 10\Phi_1(\mathcal{D}\Phi_2) + 10\Phi_2(\mathcal{D}\Phi_1)$,
which is completely determined by the quadratic terms of the equation.

Secondly, the cubic terms $Q_3\in \fr^3$ is determined by
\begin{equation}\label{eq:q3}
[\Phi_3,Q_3] + [K_2,Q_2] = 0 .
\end{equation}
According to Proposition \ref{prop:40}, for $K_2, Q_2 \in \fr^{(1,1)}$ we have
\begin{align*}
\widehat{[Q_2,K_2]} =& 30\phi u^2 \left(\zeta_1^3 z_1^1 z_2^0 + \zeta_1^3 z_2^1 z_1^0 + 2\zeta_1^2 z_1^2 z_2^0 + 2\zeta_1^2 z_2^2 z_1^0 + 4\zeta_1^2 z_1^1 z_2^1 +\zeta_1^1 z_1^3 z_2^0 + \zeta_1^1 z_2^3 z_1^0 \right) \\
&+ 30\phi u^2 \left(4\zeta_1^1 z_1^2 z_2^1 + 4\zeta_1^1 z_2^2 z_1^1 + \zeta_1^0 z_1^3 z_2^1 + \zeta_1^0 z_2^3 z_1^1 + 2\zeta_1^1 z_1^2 z_2^2\right) .
\end{align*}
Thus $[Q_2,K_2]\in\mathfrak{R}^{(1,2)}$. According to equation \eqref{grade}, for any $P\in\mathfrak{R}^{(m,n)}$ we have $[\Phi_k,P]\in\mathfrak{R}^{(m,n)}$. So equation \eqref{eq:q3} implies $Q_3\in \mathfrak{R}^{(1,2)}$. From the symbolic representation of equation \eqref{eq:q3}, we deduce
\begin{equation*}
\widehat{Q_3} = \frac{\widehat{[Q_2,K_2]}}{(\zeta_1+z_1+z_2)^3-\zeta_1^3-z_1^3-z_2^3} =10\phi u^2\left(\zeta_1^1 z_1^0 z_2^0 + \zeta_1^0 z_1^1 z_2^0 + \zeta_1^0 z_2^1 z_1^0\right),
\end{equation*}
which gives us $Q_3 = 10\Phi_1(\mathcal{D}\Phi)^2 + 20\Phi(\mathcal{D}\Phi_1)(\mathcal{D}\Phi)$.

It is straightforward to check $[K_2,Q_3]=0$. Therefore, a fifth order symmetry of the supersymmetric KdV equation \eqref{skdv}
starting with $\Phi_5$ is the one given by
 \eqref{kdv5sym} generated by its recursion operator \eqref{kdvrec}.

\section{Classification of $\lambda$-homogeneous equations}\label{sec:5}
In this section, we use the symbolic representation to prove our main classification
Theorem \ref{the}.  The computations are remarkably similar to the classical commutative \cite{sand98}
and non-commutative \cite{olver00} cases. The key differences are
\begin{itemize}
 \item the polynomials arising in the symbolic
computations includes two sets of variables: both symmetrised and anti-symmetrised variables
under permutations;
\item while the bounds on the orders of the equation and its symmetries happen to be
the same as in the classical cases, the symbolic computation relies on 
whether or not the variables commute and this leads to rather different classification results
from the classical ones.
\end{itemize}

In \cite{sand98}, we gave extensive results about the mutual divisibility of certain
particular multivariate polynomials, called ``$G$-functions", which play a crucial
role in proving the (non-)existence of symmetries. In Corollary \ref{cor:1}, we
demonstrated that the same $G$-functions appear in the computation.
Thus all the results discussed in \cite[Section 5]{sand98} are immediately applicable.
\begin{definition}
The  \(G\)-functions are the {\rm (}commutative\/{\rm )} polynomials
\begin{equation*}
G_{m}^{(l)}(y_1, \cdots, y_{l}) = \bigl(y _1 + \cdots + 
y_{l}\bigr)^m-y_1^m  -\cdots - y_{l}^m \quad (l\geq 2).
\end{equation*}
\end{definition}
Formula \eqref{Gfun} can be rewritten in terms of $G$-function as
\begin{equation}\label{Gfk}
\widehat{[{\Phi_k}, Q]} = G_{k}^{(2l+n+1)}(\zeta_1, \cdots, \zeta_{2l+1}, z_1, 
\cdots, z_n)\  \widehat{Q}, \quad Q \in \fr^{(2l+1, n)}.
\end{equation}
An immediate consequence is the following known result:
\begin{proposition}\label{prop:5}
Consider the linear evolutionary equation $\Phi_t=\sum_{k=1}^n \lambda_k 
\Phi_k$, where $\lambda_k$'s are constants and $ \lambda_n\neq 0$. The space of its 
symmetries is
\begin{itemize}
\item \( \fr\) if and only if  \(n=1\);
\item \({\fr}^{1}\) if and only if \(n > 1\).
\end{itemize}
\end{proposition}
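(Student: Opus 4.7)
The plan is to turn the symmetry equation $[K,Q]=0$, with $K=\sum_{k=1}^{n}\lambda_k\Phi_k$, into a polynomial identity in the symbolic algebra and then read off the solutions from elementary properties of the $G$-functions. To avoid a clash with the bidegree indexing in $\fr^{(p,q)}$, let me temporarily write $N$ for the order of the equation in place of $n$. Any $Q\in\fr$ decomposes as $Q=\sum_{p,q}Q^{(p,q)}$ with $Q^{(p,q)}\in\fr^{(p,q)}$; by the grading statement $[\fr^{(1,0)},\fr^{(p,q)}]\subset\fr^{(p,q)}$ (a special case of equation \eqref{grade}) together with linearity of the super bracket, $[K,Q]=0$ is equivalent to $[K,Q^{(p,q)}]=0$ for every bidegree.

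Applying Corollary \ref{cor:1} and linearity in $\Phi_k$, the condition for the component $Q^{(p,q)}$ becomes
\[
\widehat{Q^{(p,q)}}\cdot \Pi_{p+q}(\zeta_1,\dots,\zeta_p,z_1,\dots,z_q)=0,\qquad \Pi_l(y_1,\dots,y_l):=\sum_{k=1}^{N}\lambda_k\,G_k^{(l)}(y_1,\dots,y_l).
\]
Because $\bbbc[\zeta_1,\dots,\zeta_p,z_1,\dots,z_q]$ is an integral domain and $\Pi_{p+q}$ is fully symmetric (hence multiplication by it does not disturb the prescribed (anti)symmetry of $\widehat{Q^{(p,q)}}$), this identity is equivalent to $\widehat{Q^{(p,q)}}=0$ whenever $\Pi_{p+q}\not\equiv 0$, and is automatically satisfied whenever $\Pi_{p+q}\equiv 0$.

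All that remains is to detect when $\Pi_l$ vanishes. Since $G_1^{(l)}\equiv 0$ for every $l$ and $G_k^{(1)}\equiv 0$ for every $k$, one has $\Pi_l\equiv 0$ as soon as $N=1$ (any $l$) or $l=1$ (any $N$). When $N\geq 2$ and $l\geq 2$ the summand $\lambda_N G_N^{(l)}$ is the unique top-degree contribution to $\Pi_l$, and expanding it by the multinomial theorem produces the monomial $y_1^{N-1}y_2$ with coefficient $N\lambda_N\neq 0$; no lower-order $\lambda_k G_k^{(l)}$ with $k<N$ can contribute to that monomial for degree reasons. Hence $\Pi_l\not\equiv 0$, forcing $Q^{(p,q)}=0$ for every $p+q\geq 2$. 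Combining the two regimes yields precisely the alternatives in the statement: the full ring $\fr$ when $N=1$, and $\fr^{1}$ when $N\geq 2$. The only technical ingredient beyond the symbolic dictionary is the single multinomial-coefficient check in the last step, which is not a genuine obstacle.
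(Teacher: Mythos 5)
Your argument is correct and follows essentially the same route as the paper's proof: decompose $Q$ into bidegree components using the grading $[\fr^{(1,0)},\fr^{(p,q)}]\subset\fr^{(p,q)}$, apply Corollary \ref{cor:1} to reduce the symmetry condition to $\widehat{Q^{(p,q)}}\sum_k\lambda_k G_k^{(p+q)}=0$, and conclude from the (non)vanishing of that $G$-function combination. The only difference is that you spell out the final step (integral domain plus the multinomial coefficient of $y_1^{N-1}y_2$) that the paper leaves implicit.
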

\begin{proof}
Let $Q=\sum_{l,p} Q_{(l,p)}$, where $Q_{(l,p)}\in 
{\fr}^{(l,p)}$. Because $\fr$ is a graded Lie algebra, and particularly 
$[\fr^{(1,0)},\ \fr^{(l,p)}]\subset\fr^{(l,p)}$, $Q$ is a symmetry of 
this equation if and only if  $[\sum_{k=1}^n\lambda_k \Phi_k,\ Q]=0$. 
From formula (\ref{Gfk}) it follows that
\[
\sum_{k=1}^n\lambda_k G_k^{l+p}(\zeta_1,\cdots,\zeta_l,z_1,\cdots,z_p)\widehat{Q_{(l,p)}} =0.
\]
Under the assumption, this holds if and only if either $n=1$ or $n> 1$ and $l+p=1$.
\end{proof}

We recall the divisibility properties of the \(G\)-functions proved in 
\cite{beuk97, sand98}.
\begin{The}\label{BeukId}
The symmetric polynomials $G_m^{(l)} (l\geq 2)$ can be factorized as
\[
G_m^{(l)}=t_m^{(l)} H_m^{(l)},
\]
where $(H_m^{(l)},H_n^{(l)})=1$ for all $n>m$, and \(t_m^{(l)}\) is one of the following polynomials.
\begin{itemize}
\item \(l = 2:\)
\begin{itemize}
\item[$\rhd$] \(m=0\pmod 2:\quad y_1y_2\)
\item[$\rhd$] \(m=3\pmod 6:\quad y_1y_2(y_1+y_2)\)
\item[$\rhd$] \(m=5\pmod 6:\quad y_1y_2(y_1+y_2)(y_1^2+y_1y_2+y_2^2) \)
\item[$\rhd$] \(m=1\pmod 6:\quad y_1y_2(y_1+y_2)(y_1^2+y_1y_2+y_2^2)^2\)
\end{itemize}
\item \(l = 3:\)
\begin{itemize}
\item[$\rhd$] \(m=0\pmod 2:\quad 1\)
\item[$\rhd$] \(m=1\pmod 2:\quad (y_1+y_2)(y_1 +y_3) (y_2 +y_3)\)
\end{itemize}
\item \(l \geq 4:\)  \(1\)
\end{itemize}
\end{The}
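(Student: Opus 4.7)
The plan is to prove the theorem in two stages. First, I would locate the explicit factors $t_m^{(l)}$ by direct evaluation of $G_m^{(l)}$ at strategically chosen points of the algebraic closure of $\mathbb{Q}(y_1,\ldots,y_l)$. Second, I would establish the mutual coprimality $(H_m^{(l)},H_n^{(l)})=1$ for $n>m$ by a Mason--Stothers / Beukers-style argument over the variety cut out by any would-be common factor.

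For the factorisations with $l=2$, substituting $y_1=0$ gives $y_2^m-y_2^m=0$, so $y_1y_2 \mid G_m^{(2)}$; substituting $y_2=-y_1$ gives $-y_1^m(1+(-1)^m)$, which vanishes when $m$ is odd and yields the factor $y_1+y_2$. For the quadratic factor, let $\omega=e^{2\pi i/3}$; using $1+\omega=e^{i\pi/3}$, one computes $G_m^{(2)}(y_1,\omega y_1)=y_1^m((1+\omega)^m-1-\omega^m)$, which vanishes precisely when $m\equiv\pm 1\pmod 6$, giving the factor $y_1^2+y_1y_2+y_2^2$. Checking that the derivative $\partial_{y_2}G_m^{(2)}$ also vanishes at $y_2=\omega y_1$ amounts to $(1+\omega)^{m-1}=\omega^{m-1}$, which holds exactly when $m\equiv 1\pmod 6$, explaining the squared factor in that residue class. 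For $l=3$ and odd $m$, substituting $y_3=-y_1$ yields $G_m^{(3)}(y_1,y_2,-y_1)=-y_1^m(1+(-1)^m)=0$, so $(y_1+y_3) \mid G_m^{(3)}$, and the other two linear factors follow by the $\mathcal{S}_3$-symmetry of $G_m^{(3)}$. For $l\geq 4$, the analogous substitutions leave at least two independent surviving terms, so no such linear factor can divide $G_m^{(l)}$, whence $t_m^{(l)}=1$.

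The coprimality step is more delicate and is where I anticipate the principal obstacle. Suppose an irreducible polynomial $F$ of positive degree divided both $H_m^{(l)}$ and $H_n^{(l)}$. Then on the hypersurface $\{F=0\}$ one would have simultaneously $(y_1+\cdots+y_l)^m=\sum_i y_i^m$ and $(y_1+\cdots+y_l)^n=\sum_i y_i^n$. Expanding via the multinomial theorem and applying Newton's identities, this constrains infinitely many of the power sums $p_k=\sum_i y_i^k$ on the variety, and ultimately forces $F$ to vanish on a locus already accounted for by the trivial factors enumerated in $t_m^{(l)}$, a contradiction with $F \mid H_m^{(l)}$. For $l=2$ one must in addition verify that no further cyclotomic factor can divide both $H_m^{(2)}$ and $H_n^{(2)}$, which reduces to showing that the equation $(1+\zeta)^k=1+\zeta^k$ for a root of unity $\zeta\notin\{\omega,\omega^2\}$ has no solutions beyond those already identified --- a Diophantine finiteness statement of the type established in \cite{beuk97}. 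This is precisely why the theorem is \emph{recalled} from \cite{beuk97,sand98} rather than reproved here: locating the explicit factors is an exercise in substitution, but ruling out every other common factor requires genuine input from the polynomial $abc$ theorem and the classification of solutions to exponential Diophantine equations in roots of unity.
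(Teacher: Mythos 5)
The paper itself does not prove this theorem: it is explicitly \emph{recalled} from \cite{beuk97,sand98} (``We recall the divisibility properties of the $G$-functions proved in \cite{beuk97,sand98}''), so there is no internal proof to compare yours against. Your proposal should therefore be judged on its own merits, and it splits cleanly into a part that works and a part that does not.

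The first stage --- locating the factors $t_m^{(l)}$ by substitution --- is correct and essentially complete. The evaluations $y_1=0$, $y_2=-y_1$, $y_2=\omega y_1$ with $1+\omega=e^{i\pi/3}$, and the first-derivative test isolating $m\equiv 1\pmod 6$ for the squared cyclotomic factor, all check out, as does the $\mathcal{S}_3$-symmetry argument for $l=3$. One small omission: for the coprimality of the quotients $H_m^{(l)}$ you need the \emph{exact} multiplicities of these factors in $G_m^{(l)}$ (e.g.\ that $y_1^2+y_1y_2+y_2^2$ divides $G_m^{(2)}$ to order exactly $2$, not $\geq 3$, when $m\equiv 1\pmod 6$, and to order exactly $1$ when $m\equiv 5\pmod 6$); you verify only the first derivative, so strictly speaking $H_m^{(2)}$ could still retain a cyclotomic factor that is then shared with some $H_n^{(2)}$. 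This is fixable by one more derivative computation but should be stated.

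The second stage is where the genuine gap lies. The claim that expanding via the multinomial theorem and ``applying Newton's identities'' constrains the power sums on $\{F=0\}$ and ``ultimately forces $F$ to vanish on a locus already accounted for'' is not an argument --- it is a restatement of the conclusion. The coprimality $(H_m^{(l)},H_n^{(l)})=1$ is the entire nontrivial content of the theorem; for $l=2$ it is Beukers' theorem, whose proof requires a careful analysis of where the roots of $G_m^{(2)}(1,z)$ can lie (and in particular which roots of unity can satisfy $(1+\zeta)^k=1+\zeta^k$), and for $l\geq 3$ the argument in \cite{sand98} again rests on non-elementary input. You are candid about this and correctly identify that the theorem is imported rather than reproved, which mirrors the paper's own treatment; but as a self-contained proof the proposal does not close this step, and no elementary manipulation of symmetric functions will.
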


We now consider $\lambda$-homogeneous ($\lambda>0$) $N=1$ supersymmetric equations of the form
\begin{equation}
\Phi_t=K=K_1+K_2+K_3+\cdots,\label{hoeq}
\end{equation}
where $K_1 = \Phi_n~(n\geq 2)$ and 
\begin{equation*}
K_i \in \bigoplus_{l=0}^{\lfloor \frac{i-1}{2}\rfloor} \fr^{(2l+1,i-2l-1)}\subset \fr^i .
\end{equation*}
For each $K_i$, the degree of its symbolic representation $\widehat{K_i}$ is determine by 
\begin{equation*}
d_i = \left\{\begin{aligned}
& n-(i-1)\lambda-\frac{1}{2} && \mbox{($i$ is even.)}\\
& n-(i-1)\lambda && \mbox{($i$ is odd.)}
\end{aligned}\right.
\end{equation*}
Note that if $d_i$ is not in $\bbbn$ then $K_i=0$. This constraint restricts
equation \eqref{hoeq} to finite terms, and also reduces the number 
of relevant $\lambda$ to be finite.

Due to Proposition \ref{prop:5}, a nontrivial symmetry $Q\in\fr$ of equation \eqref{hoeq} is supposed to be 
\begin{equation}\label{symm}
Q = Q_1 + Q_2 + Q_3 + \cdots, \quad (Q_i\in\fr^i)
\end{equation}
where the leading term $Q_1$ is either $\Phi_m$ or $(\cD\Phi_m)$. 
For the latter, according to equation \eqref{def:com} we obtain the super commutator of $Q$ with itself, given by
\begin{equation*}
[Q,\ Q] = 2\Phi_{2m+1}+\cdots
\end{equation*}
which is a new symmetry of equation \eqref{hoeq}. Hence, without loss of generality, 
we only assume $Q$ is an $m$-th order nontrivial symmetry, which starts with $Q_1=\Phi_m(2\le m\neq n)$. 

To solve the symmetry condition $[K, Q]=0$, we break it up into
\begin{equation}\label{Condition}
\sum_{i=1}^r[ K_i,  Q_{r+1-i}]=0,
\end{equation}
which holds for all $r\geq 1$.
Clearly we have $ [ K_1,  Q_1]=0$ when $r=1$. The next equation to be solved 
is for $r=2$, that is,
\begin{equation}\label{quad}
 [K_1, Q_2]+ [K_2, Q_1]=0,
\end{equation}
which is trivially satisfied if equation (\ref{hoeq}) has no quadratic terms, i.e. $K_{2} = 0$. 
Let us concentrate on the case $K_{2}\neq 0$. 
We first rewrite equation \eqref{quad} in symbolic form using Corollary \ref{cor:1} or formula \eqref{Gfk}. 
Because $K_2\in \fr^{(1,1)}$, we assume that $\widehat{K_2}=\phi u F(\zeta_1,z_1)$. 
Then $\widehat{Q_2}$ is solved and formulated as
\begin{equation}\label{solqua}
 \widehat{Q_2}= \phi u \frac{F(\zeta_1,z_1) G_m^{(2)}(\zeta_1,z_1)} 
{G_n^{(2)}(\zeta_1,z_1)}=\phi u\frac{\Lambda(\zeta_1, z_1)}{\zeta_1 z_1 
(\zeta_1+z_1)}  G_m^{(2)}(\zeta_1,z_1),
\end{equation}
where $\lim_{\zeta_1+z_1\to 0} \Lambda(\zeta_1,z_1)$ exists. 

Taking $r=3$ in equation \eqref{Condition}, we have
\begin{equation}\label{cubcd}
 [K_1, Q_3]+[K_2, Q_2]+[K_3, Q_1]=0.
\end{equation}
To find $Q_3$, we have to compute $[K_2, Q_2]$, whose symbolic representation 
can be obtained by using Proposition \ref{prop:40}. Notice that there is a unique decomposition
\begin{equation*}
[K_2, Q_2] = P_{(1,2)} + P_{(3,0)}\quad\mbox{where $P_{(1,2)}\in\fr^{(1,2)}$ and $P_{(3,0)}\in\fr^{(3,0)}$}.
\end{equation*}
It follows from Theorem \ref{BeukId} that the nontrivial common factor occurs when both $n$ and $m$ are odd.
Thus we need to treat this case specially.
In the same way as we did for classification
of scalar homogeneous both commutative and non-commutative evolutionary equations,
we are able to show that $\widehat{P_{(1,2)}}$ is exactly divided by
$(\zeta_1+z_1)(\zeta_1+z_2)(z_1+z_2)$, while $\widehat{P_{(3,0)}}$ by
$(\zeta_1+\zeta_2)(\zeta_1+\zeta_3)(\zeta_2+\zeta_3)$ when both $m$ and $n$ are odd
and $ (\zeta_1+z_1)\big{|}\widehat{K_2}$ or $ \zeta_1z_1\big{|}\widehat{K_2}$.
Therefore, in this case, the existence of a symmetry is uniquely determined 
by the existence of its quadratic term. The proof is the same as we did in classical cases \cite{sand98,olver00}.
For completeness, we only present the statement.
\begin{The}
Suppose equation \eqref{hoeq} has a nonzero symmetry $Q$ of
the form \eqref{symm} with $m\geq 2$. Let
\begin{eqnarray}\label{ns}
S=\Phi_{m^\prime}+S_2+\cdots  \quad S_i\in \fr^i.
\end{eqnarray}
If a nonzero quadratic differential polynomial $S_2\in\fr^2$ satisfies $[\Phi_n, S_2]+ [K_2, \Phi_{m^\prime}]=0$ (cf. equation \eqref{quad} )
when $n$ and $m^\prime$ are both odd, then it uniquely determines a symmetry of the form \eqref{ns} for equation \eqref{hoeq}.
Moreover, these symmetries commute with each other, i.e. $[Q,\ S]=0$.
\end{The}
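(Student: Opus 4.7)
The plan is to construct $S_3,S_4,\ldots$ inductively from $S_2$, in the same style as the construction of the fifth-order symmetry of the supersymmetric KdV equation illustrated at the end of Section~\ref{sec:4}. Assuming $S_2,\ldots,S_{r-1}$ have been found, the degree-$r$ component of the symmetry condition $[K,S]=0$ reads
\begin{equation*}
[\Phi_n, S_r]=-\sum_{i=2}^{r-1}[K_i, S_{r+1-i}].
\end{equation*}
By Corollary~\ref{cor:1} and formula \eqref{Gfk}, passing to symbolic representation turns this into a polynomial division problem: one seeks $\widehat{S_r}$ such that $G_n^{(l)}\,\widehat{S_r}$ equals the already-computed symbolic right-hand side, where $l$ is the total number of symbols occurring in elements of $\fr^r$.

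The main step is to show that $G_n^{(l)}$ divides the right-hand side at every stage. Using the factorization $G_n^{(l)}=t_n^{(l)}H_n^{(l)}$ of Theorem~\ref{BeukId}, divisibility by $H_n^{(l)}$ is automatic by the same universal argument used in the classical commutative \cite{sand98} and non-commutative \cite{olver00} settings. The non-trivial factor $t_n^{(l)}$ is a genuine obstruction only for $l\in\{2,3\}$ and $n$ odd. In that case one proceeds exactly as in the $r=3$ discussion just above the statement, verifying by induction on $r$ that each symbolic bracket $\widehat{[K_i, S_{r+1-i}]}$ is exactly divisible by the required products of $(\zeta_a+\zeta_b)$, $(z_a+z_b)$, $(\zeta_a+z_b)$ or $\zeta_a z_b$, provided the divisibility property of $\widehat{K_2}$ propagates through $\widehat{S_j}$ for $j<r$, which it does by the explicit formula \eqref{solqua}. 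Once divisibility is secured, the quotient is uniquely determined in the ring of polynomials with the correct (anti-)symmetry, so $S_r$ exists and is unique at each step. The $\lambda$-homogeneity bounds the total degrees involved, so the recursion terminates and yields a genuine polynomial symmetry $S$.

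For the commutativity $[Q,S]=0$, the super Jacobi identity \eqref{pjacobi} applied to $K$, $Q$, $S$ (both of which commute with $K$) shows that $T:=[Q,S]$ is itself a symmetry of $K$. Its linear part is $[\Phi_m,\Phi_{m'}]=0$. For the quadratic part, the explicit expressions
\begin{equation*}
\widehat{Q_2}=\phi u\,F(\zeta_1,z_1)\,\frac{G_m^{(2)}(\zeta_1,z_1)}{G_n^{(2)}(\zeta_1,z_1)},\qquad
\widehat{S_2}=\phi u\,F(\zeta_1,z_1)\,\frac{G_{m'}^{(2)}(\zeta_1,z_1)}{G_n^{(2)}(\zeta_1,z_1)}
\end{equation*}
(where $\widehat{K_2}=\phi u F(\zeta_1,z_1)$) make $G_m^{(2)}\widehat{S_2}$ and $G_{m'}^{(2)}\widehat{Q_2}$ coincide, so the symbolic form of $T_2$ vanishes. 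A straightforward induction then forces $T=0$: if $T_2=\cdots=T_{r-1}=0$, the degree-$r$ part of $[K,T]=0$ collapses to $[\Phi_n, T_r]=0$, i.e.\ $G_n^{(l)}\widehat{T_r}=0$, and since $G_n^{(l)}$ is a nonzero polynomial this forces $\widehat{T_r}=0$.

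The main obstacle, exactly as in the classical analogues \cite{sand98,olver00}, is verifying the divisibility by the obstruction factor $t_n^{(l)}$ at each inductive step; this is where the parity hypotheses on $n$ and $m'$, together with the specific factorization of $\widehat{K_2}$, are indispensable. The remaining combinatorics is long but routine, which is why the authors omit the detailed calculation.
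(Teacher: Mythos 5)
Your skeleton---inductive construction of $S_r$ by symbolic division, uniqueness from the injectivity of $\mathrm{ad}_{\Phi_n}$ on $\fr^j$ for $j\geq 2$, and the commutativity argument via the Jacobi identity together with the identity $G_m^{(2)}\widehat{S_2}=G_{m'}^{(2)}\widehat{Q_2}$---is the right template, and the uniqueness and $[Q,S]=0$ parts of your write-up are correct and essentially what the paper (via \cite{sand98,olver00}) intends.

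There is, however, a genuine gap in the existence step, at exactly the point you declare ``automatic''. Divisibility of the degree-$r$ obstruction $\widehat{O_r}$, where $O_r=\sum_{i=2}^{r-1}[K_i,S_{r+1-i}]$, by the factor $H_n^{(l)}$ of $G_n^{(l)}$ is \emph{not} a universal consequence of the bracket structure or of divisibility properties of $\widehat{K_2}$: for $l\geq 4$ one has $G_n^{(l)}=H_n^{(l)}$, a nontrivial polynomial depending on $n$, and a bracket of polynomials carrying only linear factors such as $(\zeta_a+z_b)$ or $\zeta_az_b$ has no reason to be divisible by it. Tellingly, your existence argument never invokes the hypothesized symmetry $Q$, yet without that hypothesis the statement is false: taking $\widehat{K_2}=\phi u\,G_n^{(2)}(\zeta_1,z_1)F_0$ with $F_0$ generic, the quadratic term $S_2$ exists for every $m'$, but such an equation generically has no higher symmetry, the obstruction already failing at degree $3$. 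The missing mechanism---the ``same proof as in the classical cases'' the authors allude to---is the following. Put $W=[Q,S^{(r-1)}]$ with $S^{(r-1)}=\Phi_{m'}+S_2+\cdots+S_{r-1}$; since $[K,Q]=0$, the Jacobi identity \eqref{pjacobi} gives $[K,W]=\pm[Q,[K,S^{(r-1)}]]$, whose lowest-degree component lies in degree $r$. Injectivity of $\mathrm{ad}_{\Phi_n}$ then forces $W_2=\cdots=W_{r-1}=0$, and comparing degree-$r$ components yields $G_n^{(l)}\widehat{W_r}=\pm G_m^{(l)}\widehat{O_r}$ on each graded piece $\fr^{(a,b)}$, $a+b=r$. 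Only now does Theorem \ref{BeukId} enter: the coprimality $(H_n^{(l)},H_m^{(l)})=1$ forces $H_n^{(l)}\mid\widehat{O_r}$, while the residual factors $t_n^{(l)}$ for $l\in\{2,3\}$ are the part genuinely handled by the linear-factor propagation you describe (and where the parity assumptions on $n$ and $m'$ enter). Without this use of $Q$, your induction does not close.
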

This theorem shows that the existence of one symmetry implies the existence of 
infinitely many symmetries as long as we know the existence of either 
quadratic or cubic terms of the symmetries for equation \eqref{hoeq}.

We have a very interesting observation implied by Theorem \ref{BeukId}. 
When $n$ and $m$ are both odd, the symbolic expression of $\widehat{Q_2}$ 
(cf. equation \eqref{solqua}) can be reformulated as
\begin{equation}\label{qua}
\frac{ \widehat{K_2}\ (\zeta_1^2+\zeta_1 z_1 + z_1^2)^{s-s'}\ H_m^{(2)}(\zeta_1,z_1)}
{H_n^{(2)}(\zeta_1,z_1)} ,
\end{equation}
where $s=\frac{m+3}{2}\pmod 3$ and $s'=\frac{n+3}{2}\pmod 3$. The quadratic term $Q_2$ 
is only well-defined if expression \eqref{qua} does be a polynomial, and further uniquely determines 
a symmetry $Q=\Phi_n+Q_2+\cdots$. As a necessary condition for this requirement, we deduce that
$\lambda \leq \frac{5}{2}+2\min(s,s')$.
The evolutionary equations defined by $Q$ has the same symmetries 
as equation \eqref{hoeq}. So instead of equation \eqref{hoeq} we
may consider the equation given by \(Q\). The lowest possible $m$ is  $2 s +3$
for $s=0,1,2$. Therefore we only need to consider $\lambda$-homogeneous 
equations with $\lambda \leq 6\frac{1}{2}$ of orders not greater than $7$.

A similar observation can be made for even $n>2$. Suppose we have found a
nontrivial symmetry with quadratic term
\begin{equation*}
\phi u \frac{F(\zeta_1, z_1)\ G_m^{(2)}}{\zeta_1 z_1 \,H_n^{(2)}} ,
\end{equation*}
which immediately implies $\lambda \leq \frac{3}{2}$. Then the quadratic term corresponding to $ 2
\phi u\frac{F(\zeta_1, z_1)}{ H_n^{(2)}} $ defines a symmetry starting with $\Phi_2$.
Therefore, we only need to find the symmetries of equations of order $2$ to 
get the
complete classification of symmetries of scalar $\lambda$-homogeneous 
equations (with $\lambda \leq \frac{3}{2}$) starting with an even linear term.

Finally, we analyze the case when equation (\ref{hoeq}) has no quadratic terms. 
If $K_i=0$ for $i=2,\cdots,j-1$, then we translate $[ K_{1},
 Q_j]+ [ K_{j},  Q_{1}]=0$ to symbolics, i.e.
\begin{equation*}\label{Srel}
 G_n^{(j)}\widehat{Q_j}=G_m^{(j)}\widehat{K _j}.
\end{equation*}
According to Theorem \ref{BeukId}, we know that there 
are
no symmetries for equation \eqref{hoeq} when \(j\geq 4\), or when \(j=3\) and \(n\) is 
even.
When \(j=3\) and \(n\) is odd, it can only have odd order symmetries. In this 
situation, 
one can remark that if the equation possesses symmetries for any $m$ then it 
must admit a symmetry of order $3$.
By now, we have proved the following statement:
\begin{The}\label{ReductionTheorem}
A nontrivial symmetry of a \( \lambda \)-homogeneous supersymmetric evolutionary
equation with $\lambda>0$ 
is part of a hierarchy starting at order \(2,\ 3,\ 5\) or \(7\).
\end{The}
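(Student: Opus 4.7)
The plan is to follow the graded bootstrap laid out in the preceding pages and show that in every case the hierarchy of $Q$ can be forced to start at an order in $\{2,3,5,7\}$. Write the equation and its symmetry as in \eqref{hoeq} and \eqref{symm}; by replacing $Q$ with $[Q,Q]=2\Phi_{2m+1}+\cdots$ if the leading term of $Q$ happens to be $(\cD\Phi_m)$, I may assume $Q_1=\Phi_m$ with $m\neq n$. The argument then splits on whether $K_2$ is nonzero.

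If $K_2\neq 0$, I would solve the degree-$2$ component of \eqref{Condition} symbolically as in \eqref{solqua} and use Theorem \ref{BeukId} to test when the resulting rational expression for $\widehat{Q_2}$ is actually a polynomial. When $n$ is even, the factorisation forces $\lambda\leq 3/2$, and dividing the quotient $\widehat{Q_2}$ by $G_m^{(2)}/(\zeta_1 z_1)$ supplies the quadratic piece of a genuine second-order symmetry, so the hierarchy starts at order $2$. When $n$ and $m$ are both odd, the condition reduces to \eqref{qua}, whose exponent $s-s'$ with $s,s'\in\{0,1,2\}$ must be nonnegative; the minimal $m$ available in each residue class is $m=2s+3\in\{3,5,7\}$, yielding the remaining three admissible starting orders.

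If $K_2=0$, let $j\geq 3$ be the smallest index with $K_j\neq 0$. The degree-$j$ piece of \eqref{Condition} reads $G_n^{(j)}\widehat{Q_j}=G_m^{(j)}\widehat{K_j}$. Theorem \ref{BeukId} asserts $(G_n^{(j)},G_m^{(j)})=1$ whenever $j\geq 4$, and likewise for $j=3$ with $n$ even, so no nontrivial $Q_j$ exists in those ranges. For $j=3$ with $n$ odd, the only common factor permitted is $(y_1+y_2)(y_1+y_3)(y_2+y_3)$, which forces $m$ to be odd; the same argument as in the classical commutative case then shows that the existence of any such higher-order symmetry entails one of order $3$, and the hierarchy once more starts in $\{2,3,5,7\}$.

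The most delicate part will be the polynomiality analysis in the odd-odd subcase of $K_2\neq 0$: one must pin down the exact exponent of the cyclotomic factor $\zeta_1^2+\zeta_1 z_1+z_1^2$ in the reduced quotient \eqref{qua}, and then invoke the bootstrap theorem stated just above to promote the quadratic datum into a full symmetry hierarchy for the equation defined by $Q$. Once this bookkeeping and the coprimality statement from Theorem \ref{BeukId} are in hand, the four admissible starting orders $\{2,3,5,7\}$ are forced and no other possibility survives.
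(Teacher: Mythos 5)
Your proposal reproduces the paper's own argument essentially step for step: the same reduction to $Q_1=\Phi_m$ via $[Q,Q]$, the same split on $K_2\neq 0$ versus $K_2=0$, the same use of Theorem \ref{BeukId} to force order $2$ for even $n$ and the minimal odd orders $2s+3\in\{3,5,7\}$ in the odd--odd case, and the same coprimality argument eliminating $j\geq 4$ and $j=3$ with $n$ even. The one inaccuracy is your side remark that the exponent $s-s'$ in \eqref{qua} must be nonnegative --- the paper does not require this (its necessary condition is $\lambda\leq\frac{5}{2}+2\min(s,s')$, since a negative power of $\zeta_1^2+\zeta_1 z_1+z_1^2$ can be absorbed by factors of $\widehat{K_2}$) --- but this does not affect the conclusion about the admissible starting orders.
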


Only equations with nonzero quadratic or cubic terms have nontrivial
symmetries. For each possible \(\lambda>0\), we must find a third order symmetry 
for a
second order equation, a fifth order symmetry for a third order equation, a 
seventh
order symmetry for a fifth order equation with quadratic terms, and the 
thirteenth
order symmetry for a seventh order equation with quadratic terms. 
Hence, there are finite possibilities for the order $n$ and $\lambda$, as illustrated by Table \ref{table:1}.
\begin{table}[!h]\centering
\begin{tabular}{c|cccccccccc}
   \hline
	$n$ &  \multicolumn{10}{|c}{$\lambda$}  \\
	\hline
	2 & 1/2, & & 3/2, & & & & & & & \\
	\hline
	3 & 1/2, & 1, & 3/2, & & 5/2, & & & & & \\
	\hline
	5 & 1/2, & 1, & 3/2, & 2, & 5/2, & & 7/2, & 9/2, & & \\
	\hline
	7 & 1/2, & 1, & 3/2, & 2, & 5/2, & 3, & 7/2, & 9/2, & 11/2, & 13/2 \\
	\hline
\end{tabular}	
\caption{All values of $n$ and $\lambda$} \label{table:1}
\end{table}

For every pair of $(n,\lambda)$, equation \eqref{susymodel} can be explicitly written out as supersymmetric differential 
polynomials in $\Phi$ with free coefficients. Let's take $(n,\lambda) = (3,1/2)$ 
as an example. The generic 3rd order $1/2$-homogeneous equation is given by
\begin{align}
\Phi_t =  \Phi_3 &+ c_{1}\Phi_2(\cD \Phi) + 
c_{2}\Phi_1(\cD \Phi_1) + c_{3}\Phi(\cD \Phi_2) + 
c_{4}\Phi_1(\cD \Phi)^2 \nonumber \\
& + c_{5}\Phi(\cD \Phi_1)(\cD \Phi) + c_{6}\Phi(\cD \Phi)^3 
. \label{eq3}
\end{align}
As we did for the supersymmetric KdV equation at the end of Section \ref{sec:4}, we compute 
its 5th order $1/2$-homogeneous symmetry $Q$ starting with the linear term $\Phi_5$. 
The existence of such symmetry leads to the conditions on the coefficients $c_i$ in \eqref{eq3},
which is a system of algebraic equations, whose solutions give us all integrable 
cases of equation \eqref{eq3}. 

For seventh order equations having thirteenth order symmetries, a fact indicated by their quadratic terms is that they all have 
fifth order symmetries. So there are no seventh order equations in our classification result. 

\section{Conclusions and discussions}
As an efficient tool to study the integrability, the symbolic method is successfully extended to the $N=1$ supersymmetric case in this paper. Any $N=1$ supersymmetric differential polynomial is associated to a unique multi-variable polynomial which is anti-symmetric with respect 
to symbols of fermionic variables and meanwhile symmetric with respect to symbols of bosonic variables. All operations on the ring of $N=1$ supersymmetric differential polynomials can be carried out in the symbolic way. On the basis of these results, we give a global classification for scalar $\lambda$-homogeneous $N=1$ supersymmetric evolutionary equations when $\lambda>0$, and identify eight leading equations having infinitely many higher order symmetries. Eight prototypes are respectively nontrivial supersymmetric integrable extensions of KdV, potential KdV, modified KdV, third order Burgers, Sawada-Kotera, fifth order KdV, Fordy-Gibbons and fifth order modified KdV equations. Two facts are noticed that there is no nontrivial supersymmetric integrable counterparts for the second order Burgers and Kaup-Kupershmidt equations, but there are two nonequivalent nontrivial supersymmetric integrable versions respectively for the fifth order KdV and modified KdV equations. Super symmetrisation both decreases and increases the number of integrable models.

Besides supersymmetric integrable evolutionary equations, non-evolutionary ones, especially equations of Camassa-Holm type, attract much attention in recent years. Various supersymmetric Camassa-Holm equations have been proposed through different approaches \cite{devsch01,luoxia13}, but their integrability has not been proved. 
In classical case, the symmetry approach in symbolic representation formulated in \cite{mn02} has been successfully applied to classification of Camassa-Holm type equations.
We expect that the formulation of symmetry approach in symbolic method developed in this paper can 
be applied to settle these problems.

\section*{Acknowledgment}
This work was done during KT's visit in the University of Kent, which is supported by 
the China Scholarship Council. KT would like to thank the School of Mathematics, Statistics \& Actuarial Science
for the hospitality. KT is also partially supported by the National Natural Science Foundation of China (NNSFC) (Grant Nos. 11271366, 11331008 and 11505284).

\appendix

\section{Scalar $\lambda$-homogeneous ($\lambda>0$) $N=1$ supersymmetric integrable equations: trivial cases}\label{app:a}
For completeness, trivial supersymmetric integrable equations are presented, and all of them are converted into classical integrable equation by the same potential transformation.
\begin{itemize}
\item[(\romannumeral1).] Trivial supersymmetric Burgers equation \cite{carstea02} ($\lambda = \frac{1}{2}$)
\begin{equation}\label{bsburgers}
\Phi_t = \Phi_{2} + 2\Phi_1(\cD \Phi) ,
\end{equation}
and it is rewritten in components as
\begin{equation*}
\left\{\begin{aligned}
v_t = v_{2} &+ 2v_1 v \\
\xi_t =\xi_{2} &+ 2\xi_1 v .
\end{aligned}\right.
\end{equation*}

Under the super Cole-Hopf transformation $\Phi=\cD \ln V$, where $V=V(x,\theta,t)$ is bosonic, equation \eqref{bsburgers} is linearized \cite{carstea02} to the heat equation $V_t = V_2$, and its infinitely many symmetries can be easily deduced from those of the heat equation via the Cole-Hopf transformation . 

\item[(\romannumeral2).] Trivial supersymmetric KdV equation \cite{becker93} ($\lambda = \frac{3}{2}$)
\begin{equation}\label{bskdv}
\Phi_t = \Phi_{3} + 6\Phi_1(\cD \Phi) ,
\end{equation}
and in components it is of the form
\begin{equation*}
\left\{\begin{aligned}
v_t = v_{3} &+ 6v_1v \\
\xi_t = \xi_{3} & +6\xi_1 v .
\end{aligned}\right.
\end{equation*}

Equation \eqref{bskdv} was claimed to be relevant to supersymmetric extensions of matrix models, or conformal field theories coupled to gravity \cite{becker93}.

\item[(\romannumeral3).] Trivial supersymmetric modified KdV equation ($\lambda = \frac{1}{2}$)
\begin{equation}\label{bsmkdv}
\Phi_t = \Phi_{3} + 6\Phi_1(\cD \Phi)^2 ,
\end{equation}
and its component form is given by
\begin{equation*}
\left\{\begin{aligned}
v_t = v_{3} &+ 6v_1v^2 \\
\xi_t =\xi_{3} &+ 6\xi_1v^2 .
\end{aligned}\right.
\end{equation*}

\item[(\romannumeral4).] Trivial supersymmetric Sawada-Kotera equation \cite{tian10} ($\lambda = \frac{3}{2}$)
\begin{equation}\label{bssk}
\Phi_t  = \Phi_{5} + 5\Phi_{3}(\cD \Phi) + 5\Phi_1(\cD \Phi_{2}) + 5\Phi_1(\cD \Phi)^2 ,
\end{equation}
and in components it is rewritten as
\begin{equation*}
\left\{\begin{aligned}
v_t = v_{5} &+ 5v_{3}v + 5v_{2}v_x + 5v_1v^2 \\
\xi_t = \xi_{5} &+ 5\xi_{3}v + 5\xi_1 v_{2x} + 5\xi_1 v^2 .
\end{aligned}\right.
\end{equation*}

\item[(\romannumeral5).] Trivial supersymmetric Kaup-Kupershmidt equation \cite{tian10} ($\lambda = \frac{3}{2}$)
\begin{equation}\label{bskk}
\Phi_t = \Phi_{5} + 10\Phi_{3}(\cD \Phi) + 15\Phi_{2}(\cD \Phi_1) + 10\Phi_1(\cD \Phi_{2}) + 20\Phi_1(\cD \Phi)^2 ,
\end{equation}
and its component form is 
\begin{equation*}
\left\{\begin{aligned}
v_t = v_{5} &+ 10v_{3}v + 25v_{2}v_1 + 20v_1v^2 \\
\xi_t = \xi_{5} &+ 10\xi_{3}v + 15\xi_{2}v_1 + 10\xi_{1}v_{2} + 20\xi_1v^2 .
\end{aligned}\right.
\end{equation*}

\item[(\romannumeral6).] Trivial supersymmetric Fordy-Gibbons equation ($\lambda = \frac{1}{2}$)
\begin{align}
\Phi_t = \Phi_{5} &- 5\Phi_{3}(\cD \Phi_1) - 5\Phi_{3}(\cD \Phi)^2 - 5\Phi_{2}(\cD \Phi_{2}) - 10\Phi_{2}(\cD \Phi_1)(\cD \Phi) \nonumber \\
& - 10\Phi_1(\cD \Phi_{2})(\cD \Phi) - 5\Phi_1(\cD \Phi_1)^2 + 5\Phi_1(\cD \Phi)^4 , \label{bsfg}
\end{align}
and it is rewritten in components as 
\begin{equation*}
\left\{\begin{aligned}
v_t =& v_{5} - 5v_{3}v_1 - 5v_{3}v^2 - 5v_{2}^2 - 20v_{2}v_1v -  5v_1^3 + 5v_1v^4\\
\xi_t =& \xi_{5} - 5\xi_{3}(v_1 + v^2) - 5\xi_{2}(v_{2} + 2v_1v) - 5\xi_1(2v_{2}v + v_1^2 - v^4) .
\end{aligned}\right.
\end{equation*}
\end{itemize}

As a common remark to six trivial extensions, their triviality can be alternatively understood by introducing a potential, i.e. $\Phi = (\cD W)$, where $W=W(x,\theta,t)$ is bosonic. Then it is straightforward to show equations \eqref{bsburgers}--\eqref{bsfg} are respectively converted into
\begin{align*}
\text{Potential Burgers equation: }& W_t = W_{2} + W_1^2 ,\\
\text{Potential KdV equation: }& W_t = W_{3} + 3W_1^2 ,\\
\text{Potential modified KdV equation: }& W_t = W_{3} + 2W_1^3 ,\\
\text{Potential Sawada-Kotera equation: }& W_t = W_{5} + 5W_{3}W_1 + \frac{5}{3}W_1^3 ,\\
\text{Potential Kaup-Kupershmidt equation: }& W_t = W_{5} + 10W_{3}W_1 + \frac{15}{2}W_{2}^2 + \frac{20}{3}W_1^3 ,\\
\text{Potential Fordy-Gibbons equation: }& W_t = W_{5} - 5W_{3}W_{2} - 5W_{3}W_1^2 - 5W_{2}^2W_1 + W_1^5 .
\end{align*}
Through the potential transformation, recursion operators and/or master symmetries of equations \eqref{bsburgers}--\eqref{bsfg} are easily constructed from the relevant results about these potential equations \cite{wang02}  (or references therein), and hence will not be presented here.

\end{document}